\pdfoutput=1 

\documentclass[11pt]{article}
\usepackage[numbers,sort&compress]{natbib}

\usepackage[a4paper,margin=2.5cm]{geometry}

\usepackage[T1]{fontenc}
\usepackage[utf8]{inputenc}
\usepackage{lmodern}
\usepackage{microtype}

\usepackage{amsmath,amssymb,amsthm}
\usepackage{mathtools}
\usepackage{bm}
\usepackage{braket}

\usepackage{booktabs}
\usepackage{enumitem}
\usepackage{float}

\usepackage{algorithm}
\usepackage{algpseudocode}

\usepackage{xcolor}

\usepackage[colorlinks=true,linkcolor=blue!60!black,citecolor=green!40!black,urlcolor=blue!70!black]{hyperref}
\hypersetup{
  pdftitle={Partial Information Decomposition of Electronic Observables Along a Reaction Coordinate},
  pdfauthor={Kyunghoon Han and Miguel Gallegos Gonzales},
  pdfcreator={LaTeX},
  pdfproducer={pdflatex}
}

\newcommand{\qX}{\bar{q}_X}
\newcommand{\qY}{\bar{q}_Y}
\newcommand{\R}{\mathcal{R}}
\newcommand{\Imin}{I_{\min}}
\newcommand{\Ispec}{I_{\mathrm{spec}}}

\newcommand{\piT}{\pi_T}
\newcommand{\piX}{\pi_X}

\newtheorem{proposition}{Proposition}
\newtheorem{corollary}{Corollary}
\newtheorem{remark}{Remark}

\newcommand{\tbd}{\begin{center}\itshape [To be written.]\end{center}}

\begin{document}

\title{Partial Information Decomposition of Electronic Observables Along a Reaction Coordinate}

\author{%
  Kyunghoon Han\thanks{Department of Physics and Materials Science, University of Luxembourg, Grand Duchy of Luxembourg. Corresponding author: Kyunghoon Han (email: \texttt{kyunghoon.h@gmail.com})} \and
  Miguel Gallegos\footnotemark[1]
}

\date{}

\maketitle

\begin{abstract}
A reaction-coordinate--resolved information-theoretic analysis of chemical reactivity is developed using mutual information and partial information decomposition (PID).
Along an intrinsic reaction coordinate (IRC), a local empirical distribution is constructed at each position $s$ that couples a coarse-grained geometric progress variable (target) to two electronic readouts (sources), and the joint mutual information $I(T;X,Y)$ is decomposed into redundant, unique, and synergistic contributions using the Williams--Beer PID formalism.
In the numerical demonstrations, the target is a binned bond-asymmetry coordinate $\xi=d_{\mathrm{C}\!-\!\mathrm{nuc}}-d_{\mathrm{C}\!-\!\mathrm{LG}}$, while the sources are DDEC6 net atomic charges on the nucleophile and leaving-group centres.
Application to three prototypical S\textsubscript{N}2 reactions (identity exchange $\mathrm{F^-+CH_3F}$, halide substitution $\mathrm{F^-+CH_3Br}$, and hydroxide substitution $\mathrm{OH^-+CH_3CH_2Br}$) yields compact, symmetry-sensitive signatures of bonding evolution: the identity reaction exhibits mirror-related information profiles with exchange of unique-information contributions between equivalent centres, whereas asymmetric reactions show shifted, centre-specific redistribution among redundancy and synergy as C--X cleavage couples to C--Nu formation.
This Supplementary Information provides the formal construction, chemically motivated limiting toy models, a solvable analytic symmetric-transfer model, and the computational protocol used to obtain IRC-resolved PID curves.
\end{abstract}

\tableofcontents
\bigskip

\IfFileExists{intro.tex}{
\section{Introduction}
\label{sec:intro}

This Supplementary Information expands on material presented at the 2026 Deutsche Physikalische Gesellschaft (DPG, German Physical Society) meeting held in Dresden, Germany (Session CPP 29: Emerging Topics in Chemical and Polymer Physics, New Instruments and Methods III) and provides the detailed theoretical and computational context for a reaction-coordinate--resolved partial information decomposition (PID) analysis of chemical reactivity.

Understanding how bonds form and break is a core objective of theoretical and computational chemistry.
Within the Born--Oppenheimer picture, elementary reactions are commonly analysed on an electronic potential-energy surface in terms of reactant and product basins connected by a saddle point, with mechanistic insight extracted from minimum-energy paths (MEPs) and the intrinsic reaction coordinate (IRC) formalism.
Along such paths, a wide range of established quantum-chemical tools are routinely used to rationalise bonding evolution: population analyses and atomic charges, bond indices and bond orders, valence-bond and resonance pictures, energy-decomposition analyses, and real-space/topological analyses of the electron density.
These ``vanilla'' descriptors provide chemically interpretable narratives of nucleophilic attack, transition-state structure, charge rearrangement, and leaving-group departure, and they remain the standard language for discussing reaction mechanisms across electronic-structure methods.

A recurring practical challenge, however, is that mechanistic interpretation typically relies on \emph{multiple} descriptors, often attached to different atomic centres or defined by different partitions of the same electron density.
Even when these descriptors vary smoothly along an IRC, it is not obvious whether they report essentially the \emph{same} mechanistic signal (redundant readouts), whether one descriptor contains information that the others do not (unique readouts), or whether the mechanistically relevant signal only emerges as a \emph{joint pattern} across descriptors (synergistic readouts).
This ambiguity becomes especially pronounced in reactions where several coupled motions and electronic rearrangements occur simultaneously, and where symmetry (or its breaking) can redistribute electronic response across centres.

Information theory provides a complementary way to quantify statistical dependence between variables without assuming linear response or committing to a particular functional relationship.
In this spirit, mutual information offers a natural quantitative answer to questions of the form: \emph{how strongly does an electronic descriptor correlate with a chosen notion of reaction progress?}
However, mutual information alone cannot distinguish redundancy from synergy when multiple descriptors are considered jointly: two sources can each correlate strongly with a target while carrying largely overlapping information, or they can be individually weak yet jointly highly informative.

The PID framework was introduced precisely to resolve this ambiguity by decomposing the joint mutual information $I(T;X,Y)$ between a \emph{target} $T$ and two \emph{sources} $X,Y$ into four nonnegative components: redundancy, two unique informations, and synergy.
In this work we adopt the original lattice-based definition of PID proposed by Williams and Beer~\cite{Williams2010}, which provides a constructive, nonnegative decomposition and has become a widely used reference point in the PID literature.
At the same time, PID is an active research area with multiple formalisms and axiomatic variants; we cite Kolchinsky~\cite{Kolchinsky2022} as a recent overview that places different approaches in a common conceptual frame.
Our use of PID is therefore deliberately pragmatic: we employ the Williams--Beer decomposition as a consistent bookkeeping of how two electronic readouts encode a chosen geometric macrostate along a reaction coordinate.

Our central proposal is to apply PID \emph{locally along a reaction coordinate}.
At each position $s_0$ on an IRC, we construct an empirical distribution coupling a coarse-grained notion of progress (the target $T$) to two electronic readouts (sources $X,Y$), and we evaluate the mutual information and PID components as functions of $s_0$.
This yields reaction-coordinate--resolved profiles of redundancy, unique information, and synergy, which quantify \emph{how} electronic structure encodes reaction progress and \emph{where} along the path the encoding changes character.
For symmetric reactions, these profiles are constrained by permutation symmetries; for asymmetric reactions, the breakdown of these constraints provides a principled way to localise centre-specific electronic rearrangement.

The remainder of this Supplementary Information is organised as follows.
Section~\ref{sec:theory} introduces the reaction-coordinate--resolved PID construction, including local sampling and discretisation, and fixes the Williams--Beer PID conventions used throughout.
Section~\ref{sec:pid-irc} provides short ``toy'' limiting cases to anchor chemical intuition for redundancy, uniqueness, and synergy.
Section~\ref{sec:toy-model} develops a solvable analytic model for a symmetric atom-transfer motif, illustrating how symmetry and electronic mixing shape PID profiles in a controlled setting.
Finally, the Section \ref{sec:results} benchmarks the framework on prototypical S\textsubscript{N}2 reactions, using density-derived atomic populations (DDEC6 charges) as electronic sources and a bond-asymmetry coordinate as the target.}{\tbd}
\IfFileExists{theory-3.tex}{
\section{Theory}
\label{sec:theory}

\subsection{Partial information decomposition}
\label{sec:pid-defs}

Let $T$, $X$, $Y$ be discrete random variables with joint distribution $p(t,x,y)$.
The \emph{mutual information} between $T$ and a source $X$ is
\begin{equation}
  I(T;X) = \sum_{t,x} p(t,x)\,\log_2\frac{p(t,x)}{p(t)\,p(x)},
  \label{eq:mi}
\end{equation}
and the \emph{joint mutual information} between $T$ and the pair $(X,Y)$ is
\begin{equation}
  I(T;X,Y) = \sum_{t,x,y} p(t,x,y)\,\log_2
  \frac{p(t,x,y)}{p(t)\,p(x,y)}.
  \label{eq:mi-joint}
\end{equation}
Adding a second source cannot decrease mutual information (since conditioning reduces entropy), so $I(T;X,Y) \ge \max\!\big(I(T;X),\; I(T;Y)\big)$.

The partial information decomposition (PID) of Williams and Beer~\cite{Williams2010} decomposes $I(T;X,Y)$ into four terms:\footnote{For the two-source Williams--Beer $\Imin$ decomposition, $\R \ge 0$ and $U_X, U_Y \ge 0$ hold by construction; $S \ge 0$ was shown by Williams and Beer~\cite{Williams2010} for this specific definition (see also~\cite{Bertschinger2014}), though non-negativity is \emph{not} guaranteed for all PID definitions.
Finite-sample estimation, kernel smoothing, and discretisation can produce small numerical negatives.
The implementation records unclipped values (preserving the identity $I(T;X,Y) = \R + U_X + U_Y + S$ exactly).
For display only, any component in a user-specified tolerance window $[-\varepsilon,0]$ is set to zero; the plotted synergy is then defined by the residual $S_{\mathrm{plot}} \equiv I(T;X,Y) - \R_{\mathrm{plot}} - U_{X,\mathrm{plot}} - U_{Y,\mathrm{plot}}$, so that the four displayed components sum to~$I(T;X,Y)$.
Values below $-\varepsilon$ are flagged as unreliable rather than silently clipped.}
\begin{equation}
  I(T;X,Y) = \R + U_X + U_Y + S,
  \label{eq:pid}
\end{equation}
where $\R$ is the \emph{redundancy} (information available from either source alone), $U_X$ and $U_Y$ are the \emph{unique informations} (available from one source but not the other), and $S$ is the \emph{synergy} (available only from the pair).
These four terms are defined via the \emph{specific information}
\begin{equation}
  \Ispec(t;X)
  = \sum_x p(x \mid t)\,\log_2 \frac{p(t \mid x)}{p(t)}
  = \sum_x p(x \mid t)\,\log_2 \frac{p(x \mid t)}{p(x)}
  = D_{\mathrm{KL}}\!\big(p(X \mid t)\;\|\;p(X)\big),
  \label{eq:Ispec}
\end{equation}
which measures the information that observing $X$ provides about a particular outcome $T = t$.
(The second equality uses Bayes' rule, $p(t \mid x)/p(t) = p(x \mid t)/p(x)$.)
Since $\Ispec(t;X) \ge 0$ for every~$t$ (by the non-negativity of KL divergence) and $I(T;X) = \sum_t p(t)\,\Ispec(t;X)$, the Williams--Beer redundancy is
\begin{equation}
  \R \;\equiv\; \Imin(T; X, Y)
  \;=\; \sum_t p(t)\,\min\!\big(\Ispec(t;X),\;\Ispec(t;Y)\big).
  \label{eq:Imin}
\end{equation}
The remaining components follow algebraically:
\begin{equation}
  U_X = I(T;X) - \R,\qquad
  U_Y = I(T;Y) - \R,\qquad
  S   = I(T;X,Y) - I(T;X) - I(T;Y) + \R.
  \label{eq:UUXYS}
\end{equation}
Because the minimum in~\eqref{eq:Imin} never exceeds either specific information,
$U_X \ge 0$ and $U_Y \ge 0$.
It also follows that $\R \le \min\!\big(I(T;X),\, I(T;Y)\big)$, so the redundancy cannot exceed the weaker source's total mutual information.
Throughout this work we use the Williams--Beer $\Imin$ redundancy in the numerical pipeline described in Section~\ref{sec:computational}.
Alternative PID definitions (e.g.\ the Blackwell-order approach of Kolchinsky~\cite{Kolchinsky2022} or other unique-information frameworks~\cite{Bertschinger2014}) could be substituted into the same pipeline by replacing the redundancy functional~$\R$; a comparative study of redundancy measures for chemical applications is an interesting direction for future work.

\subsection{PID along a reaction coordinate}
\label{sec:pid-irc}
Chemical reactions couple nuclear rearrangements to a reorganisation of the electronic structure.
The goal of this work is to quantify, along a reaction path, \emph{which} electronic descriptors carry information about the evolving molecular geometry and \emph{when} two descriptors must be combined to explain that geometry (synergy) versus when they convey overlapping information (redundancy).

Consider a chemical reaction whose progress is assumed to proceed as dictated by a minimum energy path (MEP), conveniently parameterised by the intrinsic reaction coordinate (IRC), here denoted as $s$ (commonly denoted as $\chi$ in chemistry literature, and denoted as $\xi$ to avoid the ambiguity in Section \ref{sec:toy-model}).
At each IRC frame~$i$ with coordinate~$s_i$, a quantum-chemical calculation produces a set of electronic observables (atomic charges, bond orders, density-derived indices, etc.)\ alongside the molecular geometry.
From these data a probability space over reaction ``samples'' (IRC frames, or an ensemble of structures near each $s$) can be defined.
Let $\Omega=\{1,\dots,M\}$ index the sampled structures and let $p(i)$ be a chosen weighting (e.g.\ uniform, or reweighted).
Each discrete variable is then a measurable coarse-graining of a deterministic function of the geometry:
\begin{itemize}
  \item[(i)] \textbf{Geometric target.}
  Choose a feature map $g:\mathbb{R}^{3N}\to\mathbb{R}^{d}$ on nuclear configurations $\mathbf{R}$ (with $d\ge 1$),
  and a discretisation (binning/clustering) map $b_T:\mathbb{R}^d\to\mathcal{T}$ to a finite set of macrostates.
  The geometric target is the ndefined as:
  \begin{equation}
    T(i)\;:=\; b_T\!\big(g(\mathbf{R}_i)\big)\in\mathcal{T}.
  \end{equation}
  This formulation does \emph{not} assume that reaction progress is captured by a single ``key distance''.
  If several coupled motions are essential, one takes $d>1$ and lets $g(\mathbf{R})$ collect the relevant internal coordinates
  (e.g. a vector of distances/angles/dihedrals in the reactive region), or simply $g(\mathbf{R})=s$ (IRC coordinate).
  Mechanisms with changing bonding patterns are handled by defining $g$ to include \emph{all} candidate geometric features
  for the reactive event (rather than selecting one a priori) and then letting $b_T$ define macrostates in this $d$-dimensional
  feature space (e.g. via quantile binning or clustering).

  \item[(ii)] \textbf{Electronic sources.}
    Choose electronic readout functions $e_X,e_Y:\mathbb{R}^{3N}\to\mathbb{R}$ (or $\mathbb{R}^m$) evaluated at each frame (e.g.\ a real-valued atom-centred population-analysis \emph{descriptor} such as a net atomic charge, a bond order, or another density-derived index), together with discretisations $b_X:\mathbb{R}^m\to\mathcal{X}$ and $b_Y:\mathbb{R}^m\to\mathcal{Y}$.
    We then define the discrete sources by coarse-graining these numerical outputs:
    \begin{equation}
      X(i)\;:=\; b_X\!\big(e_X(\mathbf{R}_i)\big)\in\mathcal{X},\qquad
      Y(i)\;:=\; b_Y\!\big(e_Y(\mathbf{R}_i)\big)\in\mathcal{Y}.
    \end{equation}
    In particular, when we write ``atomic charge'' we mean the \emph{numerical net charge returned by a specified population-analysis scheme} (in our computations, the DDEC6 net atomic charge), which is a model-dependent summary of the electron density and is therefore treated here as a descriptor to be coarse-grained rather than as a unique physical observable.
\end{itemize}

The PID of $(T,X,Y)$ quantifies how the chosen \emph{descriptor channels} encode the geometric macrostate $T$:
\[
  \mathbf{R}\;\mapsto\; e(\mathbf{R}) \;\mapsto\; b\!\big(e(\mathbf{R})\big).
\]
Different descriptor choices (and, for charges, different population-analysis schemes) define different channels and can change the numerical PID components.
Accordingly, the decomposition should be interpreted as characterising the information about $T$ that is \emph{accessible through the selected descriptor family}
under the stated partitioning convention.
In practice we assess robustness by (a) comparing chemically motivated alternatives (e.g.\ different charge schemes or charges versus bond indices/density-based measures)
and (b) using discretisations that reduce sensitivity to smooth reparameterisations (e.g.\ quantile binning for monotone transforms), while keeping $T$ fixed.

Because the IRC is one-dimensional, any geometric observable is a deterministic function of $s$.  However, this determinism only forces the PID to collapse if the discretised target $T$ is sufficiently fine that it is (effectively) injective at the sampled resolution.  The following proposition makes this precise.

\begin{proposition}[Deterministic-source redundancy]
\label{prop:det-source}
Let $T$, $X$, $Y$ be discrete random variables.
If $T$ is a deterministic function of~$X$ (i.e.\ $H(T \mid X) = 0$) and also a deterministic function of~$Y$ ($H(T \mid Y) = 0$), then
\begin{equation}
  \R = I(T;X) = I(T;Y) = H(T),\qquad
  U_X = U_Y = S = 0.
\end{equation}
A sufficient condition is that $X = f(T)$ and $Y = g(T)$ for injective~$f,g$, since injectivity makes each map invertible on the support.
\end{proposition}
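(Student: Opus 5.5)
The plan is to first compute the three mutual informations and then show that the specific informations coincide pointwise, so that the minimum in \eqref{eq:Imin} is attained by both sources simultaneously.

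\emph{Mutual informations.} From $H(T\mid X)=0$ we get $I(T;X)=H(T)-H(T\mid X)=H(T)$, and likewise $I(T;Y)=H(T)$. For the joint source, conditioning reduces entropy, so $H(T\mid X,Y)\le H(T\mid X)=0$. Hence $I(T;X,Y)=H(T)$, consistent with the monotonicity noted after \eqref{eq:mi-joint}.

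\emph{Specific informations.} Fix $t$ with $p(t)>0$. Since $T=\phi(X)$ almost surely, for every $x$ in the support $p(t\mid x)$ is either $1$ (if $\phi(x)=t$) or $0$. In the first form of \eqref{eq:Ispec}, only the $x$ with $p(x\mid t)>0$ contribute, and these are exactly the $x$ with $\phi(x)=t$. Each such term contributes $p(x\mid t)\log_2(1/p(t))$, so
\begin{equation*}
  \Ispec(t;X)=\sum_{x:\phi(x)=t}p(x\mid t)\,\log_2\frac{1}{p(t)}=-\log_2 p(t).
\end{equation*}
The same argument for $Y$ gives $\Ispec(t;Y)=-\log_2 p(t)$. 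Therefore the minimum in \eqref{eq:Imin} equals $-\log_2 p(t)$ for every $t$, and
\begin{equation*}
  \R=-\sum_t p(t)\log_2 p(t)=H(T).
\end{equation*}
Substituting into \eqref{eq:UUXYS} gives $U_X=U_Y=H(T)-H(T)=0$ and $S=H(T)-H(T)-H(T)+H(T)=0$.

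The main point of care is the bookkeeping of zero-probability atoms in \eqref{eq:Ispec}. I will restrict the sum to the support of $p(x\mid t)$ with the convention $0\log 0=0$, and note that $x$ outside the support of $p(X)$ never appear. This makes the "either $0$ or $1$" dichotomy for $p(t\mid x)$ rigorous. Everything else is direct substitution.

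\emph{Sufficient condition.} If $X=f(T)$ with $f$ injective, then $f$ has a left inverse $f^{-1}$ on its image, and $T=f^{-1}(X)$ almost surely. Hence $H(T\mid X)=0$, and the same holds for $Y=g(T)$. The hypotheses of the proposition are then satisfied.
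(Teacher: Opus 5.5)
Your proposal is correct and matches the paper's proof step for step. You get $I(T;X)=I(T;Y)=I(T;X,Y)=H(T)$ from the vanishing conditional entropies. You show $\Ispec(t;X)=\Ispec(t;Y)=-\log_2 p(t)$ because $p(t\mid x)=1$ wherever $p(x\mid t)>0$, and then substitute into \eqref{eq:UUXYS}. Your explicit handling of zero-probability atoms and your check of the injective sufficient condition are small additions that the paper leaves implicit.
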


\begin{proof}
$H(T \mid X) = 0$ implies $I(T;X) = H(T)$.
Moreover, for every $t$, if $p(x\mid t)>0$ then $p(t\mid x)=1$, hence
$\Ispec(t;X)=\log_2(1/p(t))$.
Hence $\min(\Ispec(t;X), \Ispec(t;Y)) = \log_2(1/p(t))$ for all~$t$, giving $\R = H(T)$.
Since $T$ is determined by $X$ alone it is \emph{a fortiori} determined by the pair $(X,Y)$, so $H(T \mid X,Y) = 0$ and $I(T;X,Y) = H(T)$.
Then $U_X = I(T;X) - \R = 0$, $U_Y = 0$, and $S = I(T;X,Y) - I(T;X) - I(T;Y) + \R = H(T) - H(T) - H(T) + H(T) = 0$.
\end{proof}

\begin{remark}[Determinism and coarse-graining]
\label{rem:geom-sources}
Along an IRC, all computed observables---geometric and electronic alike---are, to numerical accuracy, deterministic functions of the nuclear geometry for a fixed electronic state and level of theory.
Non-trivial PID terms arise here not from intrinsic stochasticity but from two sources of coarse-graining.
First, the target~$T$ is a coarse summary of the full microstate: many IRC frames may share a single target bin, so the discretised sources are not in general functions of the discretised target.
Second, the kernel-smoothed local distribution~\eqref{eq:local-joint} pools several neighbouring frames, introducing distributional variation even when every individual frame is deterministic.
Proposition~\ref{prop:det-source} therefore applies only in the limiting case where $T$ is fine enough that $H(T \mid X) = H(T \mid Y) = 0$ holds within each kernel window---i.e.\ each source bin maps to a unique target bin.
In practice, a physically informative decomposition is obtained when the sources vary non-trivially across frames that share the same target bin---a condition most easily satisfied when $X$ and $Y$ are electronic observables (e.g.\ atomic charges) whose dependence on the geometric target is non-monotone or many-to-one at the chosen bin resolution.
\end{remark}

To study how the PID varies along the reaction path, a \emph{local} joint distribution is constructed at each query point~$s_0$.
Concretely, one defines a random experiment: draw a frame index~$i$ with probability proportional to $w_i(s_0)$ and output the triple $(T_i, X_i, Y_i)$.
Given $N$~IRC frames $\{(s_i, T_i, X_i, Y_i)\}_{i=1}^N$ with frame weights $\{\kappa_i\}$, a Gaussian kernel of bandwidth~$h$ defines the local weight $w_i(s_0) = \kappa_i\,\exp\!\big(-(s_0 - s_i)^2 / 2h^2\big)$, and the local joint distribution is
\begin{equation}
  p(t,x,y \mid s_0) = \frac{%
    \sum_i w_i(s_0)\;\mathbf{1}[T_i = t,\, X_i = x,\, Y_i = y]}{%
    \sum_i w_i(s_0)}.
  \label{eq:local-joint}
\end{equation}
The denominator ensures $\sum_{t,x,y} p(t,x,y \mid s_0) = 1$, so~\eqref{eq:local-joint} defines a valid probability distribution for every~$s_0$.
The local PID at~$s_0$ is then obtained by applying equations~\eqref{eq:mi}--\eqref{eq:UUXYS} to this distribution.
Because the distribution itself changes with~$s_0$ (via the kernel window), the resulting partial information (PI) parts are genuine functions of the query point, not smoothed versions of a single global decomposition.
When the two IRC branches have been given branch-equalising weights $\kappa_i = 1/N_b$ (see Step~1 below), the kernel bandwidth~$h$ is the only remaining free parameter controlling the spatial resolution of the decomposition.

\begin{proposition}[Reaction symmetry]
\label{prop:symmetry}
If the reaction possesses a permutation symmetry
$s \mapsto -s$, $T \mapsto \piT(T)$, $X \leftrightarrow Y$
(where $\piT$ is the induced target-bin relabelling), and if this symmetry is respected by the weighting (i.e.\ $p(t,x,y \mid s_0) = p(\piT(t),y,x \mid {-s_0})$), then the local PID at~$s_0$ and the local PID at~$-s_0$ are related by source exchange:
\begin{equation}
  I(T;X)\big|_{s_0} = I(T;Y)\big|_{-s_0},\qquad
  U_X\big|_{s_0} = U_Y\big|_{-s_0},\qquad
  \R\big|_{s_0} = \R\big|_{-s_0},\qquad
  S\big|_{s_0} = S\big|_{-s_0}.
  \label{eq:symmetry-local}
\end{equation}
In particular, at the symmetric point $s_0 = 0$: $I(T;X) = I(T;Y)$ and $U_X = U_Y$.
\end{proposition}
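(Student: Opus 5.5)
The plan is to show that every PID quantity is a functional of the joint distribution that transforms covariantly under relabelling of the target alphabet and under exchange of the two source slots. The symmetry hypothesis then transfers each quantity from $s_0$ to $-s_0$. Write $p = p(\cdot\mid s_0)$ and $q = p(\cdot\mid -s_0)$, so that $p(t,x,y) = q(\piT(t),y,x)$. We take $\piT$ to be a bijection of $\mathcal{T}$ (it is an induced bin relabelling). We also identify the source alphabets so that $X\leftrightarrow Y$ is meaningful: the value $x$ of $X$ at $s_0$ corresponds to the same value of $Y$ at $-s_0$.

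\textbf{Step 1: marginals.} Summing the hypothesis over $y$ gives $p_{TX}(t,x) = q_{TY}(\piT(t),x)$. Likewise, $p_{TY}(t,y) = q_{TX}(\piT(t),y)$. Further marginalisation gives $p_T(t) = q_T(\piT(t))$, $p_X = q_Y$ and $p_Y = q_X$. The pair marginal satisfies $p_{XY}(x,y) = q_{XY}(y,x)$.

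\textbf{Step 2: mutual informations.} Substitute these into \eqref{eq:mi} and reindex the sum over $t$ by the bijection $t\mapsto\piT(t)$. This yields $I_p(T;X) = I_q(T;Y)$ and $I_p(T;Y) = I_q(T;X)$. For \eqref{eq:mi-joint}, reindex by $(t,x,y)\mapsto(\piT(t),y,x)$, a bijection of the summation domain. This gives $I_p(T;X,Y) = I_q(T;X,Y)$, since the joint mutual information is symmetric in its two sources.

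\textbf{Step 3: redundancy.} From Step 1, $p(x\mid t) = q_{Y\mid T}(x\mid \piT(t))$ and $p_X = q_Y$. Hence, by \eqref{eq:Ispec},
\[
\Ispec^{p}(t;X) = \Ispec^{q}(\piT(t);Y), \qquad \Ispec^{p}(t;Y) = \Ispec^{q}(\piT(t);X).
\]
The minimum in \eqref{eq:Imin} is symmetric in its arguments. Together with $p_T(t) = q_T(\piT(t))$ and reindexing over $t$, this gives $\R|_{s_0} = \R|_{-s_0}$.

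\textbf{Step 4: remaining components.} Apply \eqref{eq:UUXYS} to the quantities from Steps 2 and 3:
\[
U_X|_{s_0} = I_q(T;Y) - \R|_{-s_0} = U_Y|_{-s_0}.
\]
For synergy, $S|_{s_0} = S|_{-s_0}$, because its defining expression is symmetric under swapping $I(T;X)$ and $I(T;Y)$. Setting $s_0 = 0$ makes $p = q$. The relations then read $I(T;X) = I(T;Y)$ and $U_X = U_Y$ at the symmetric point.

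The only step I expect to need care is Step 3. There one must check that the $\Imin$ functional is invariant under relabelling of the target values, which holds because it is a $p(t)$-weighted sum over $t$, and symmetric in the sources, which holds because $\min$ is symmetric. One must also check that the alphabet identification between $\mathcal{X}$ and $\mathcal{Y}$ is used consistently. The remaining steps are mechanical reindexings of finite sums.
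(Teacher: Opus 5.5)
Your proposal is correct and follows essentially the same route as the paper's proof: marginalise the symmetry hypothesis, then use invariance of mutual and specific information under the bijective relabelling $t\mapsto\piT(t)$, together with the symmetry of $\min$, to transfer $I(T;X)$, $\R$ and $I(T;X,Y)$, and finally obtain $U_X$, $U_Y$ and $S$ algebraically from \eqref{eq:UUXYS}. Your version is slightly more explicit than the paper's, since you derive the cross identity $I(T;Y)|_{s_0}=I(T;X)|_{-s_0}$ needed in the synergy step and state the identification of the source alphabets that the hypothesis implicitly requires.
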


\begin{proof}
By hypothesis, the local distribution satisfies
$p(t,x,y \mid s_0) = p(\piT(t),\,y,\,x \mid {-s_0})$.
Marginalising over~$y$ gives
$p_{TX}(t,x \mid s_0) = p_{TY}(\piT(t),x \mid {-s_0})$.
Since mutual information is invariant under bijective relabelling of states, $I(T;X)|_{s_0} = I(T;Y)|_{-s_0}$.
For the redundancy, the symmetry hypothesis implies
$p(t \mid s_0) = p(\piT(t) \mid {-s_0})$ and
$\Ispec(t;X)|_{s_0} = \Ispec(\piT(t);Y)|_{-s_0}$ (and likewise with $X$ and $Y$ exchanged).
Therefore
\begin{align*}
  \R\big|_{s_0}
  &= \textstyle\sum_t p(t \mid s_0)\,
     \min\!\bigl(\Ispec(t;X)|_{s_0},\;\Ispec(t;Y)|_{s_0}\bigr)\\
  &= \textstyle\sum_t p(\piT(t) \mid {-s_0})\,
     \min\!\bigl(\Ispec(\piT(t);Y)|_{-s_0},\;\Ispec(\piT(t);X)|_{-s_0}\bigr).
\end{align*}
Since $\piT$ is a bijection (and $\min$ is symmetric in its arguments), relabelling $t' = \piT(t)$ gives $\R|_{s_0} = \R|_{-s_0}$.
Then $U_X|_{s_0} = I(T;X)|_{s_0} - \R|_{s_0} = I(T;Y)|_{-s_0} - \R|_{-s_0} = U_Y|_{-s_0}$.
For the joint mutual information, the symmetry hypothesis gives a bijective relabelling of the full joint table $p(t,x,y \mid s_0) \mapsto p(\piT(t),y,x \mid {-s_0})$; since $I(T;X,Y)$ is invariant under bijective relabelling of states, $I(T;X,Y)|_{s_0} = I(T;X,Y)|_{-s_0}$.
The synergy identity then follows from $S = I(T;X,Y) - I(T;X) - I(T;Y) + \R$ and the established identities for each term.
At $s_0 = 0$ the two sides coincide, giving $I(T;X) = I(T;Y)$ and $U_X = U_Y$.
\end{proof}

\noindent
This proposition applies, for instance, to the symmetric hydrogen exchange reaction $\mathrm{H} + \mathrm{H}_2 \to \mathrm{H}_2 + \mathrm{H}$, where the two terminal hydrogen atoms are related by the reaction symmetry.
Note that the proposition establishes $U_X = U_Y$ but does not imply that the unique informations vanish; generically, $U_X = U_Y > 0$ when the target~$T$ is a \emph{signed} observable (e.g.\ $T = \mathrm{bin}(d_X - d_Y)$), because each source is preferentially informative about the side of the reaction on which its atom forms a bond.

\paragraph{General structure of the local PID.}
For an arbitrary reaction---including asymmetric processes such as $\mathrm{A} + \mathrm{BC} \to \mathrm{AB} + \mathrm{C}$ where no permutation symmetry relates the sources---the local PID at every query point~$s_0$ satisfies two identities that follow directly from the definitions~\eqref{eq:pid} and~\eqref{eq:UUXYS} and hold without any symmetry assumption.
First, the difference between the two marginal mutual informations is carried entirely by the unique informations:
\begin{equation}
  I(T;X)\big|_{s_0} - I(T;Y)\big|_{s_0}
  \;=\; U_X(s_0) - U_Y(s_0).
  \label{eq:mi-diff-unique}
\end{equation}
This follows immediately from $I(T;X) = \R + U_X$ and $I(T;Y) = \R + U_Y$; the redundancy cancels exactly.
Equation~\eqref{eq:mi-diff-unique} shows that whenever one source carries more marginal mutual information than the other, the \emph{difference} is accounted for entirely by the unique informations; redundancy and synergy do not contribute to it. (This constrains the difference $U_X - U_Y$; it does not, by itself, determine the absolute magnitudes of $U_X$ and $U_Y$, which also depend on the redundancy.)
Second, the \emph{co-information} $\mathrm{CI}(T;X;Y) \equiv I(T;X) + I(T;Y) - I(T;X,Y)$ decomposes as
\begin{equation}
  \mathrm{CI}(T;X;Y)\big|_{s_0}
  \;=\; \R(s_0) - S(s_0),
  \label{eq:coinfo}
\end{equation}
which follows by substituting $I(T;X) = \R + U_X$, $I(T;Y) = \R + U_Y$, and $I(T;X,Y) = \R + U_X + U_Y + S$ into the definition.
Thus the co-information, which is the standard (but ambiguous) measure of three-variable interaction, is precisely the redundancy--synergy balance:
$\mathrm{CI} > 0$ when redundancy dominates,
$\mathrm{CI} < 0$ when synergy dominates, and
$\mathrm{CI} = 0$ when the two are exactly matched.
The PID resolves this sign ambiguity by reporting $\R$ and $S$ separately.

\paragraph{Chemical interpretation.}
In the molecular context, the four PI parts admit the following interpretation at each point~$s_0$ along the IRC.
These statements concern the statistical structure of the local empirical distribution~\eqref{eq:local-joint} (with a fixed choice of target $T$ and sources $X,Y$); they are suggestive of mechanism but do not, by themselves, establish causality.
Importantly, the numerical values of the components are \emph{conditional} on modelling choices: (i) which electronic descriptors are used to define $X$ and $Y$ (e.g.\ atomic charges, bond indices, density-derived measures), and (ii) how the electronic degrees of freedom are partitioned into ``sources'' (choice of atoms/fragments/orbitals) and, when relevant, which population/partitioning scheme is used.
Accordingly, mechanistic conclusions should be supported by robustness checks across reasonable descriptor families, partition choices, and discretisations.

\begin{enumerate}[label=(\roman*),nosep]
  \item \emph{Redundancy}~$\R(s_0)$ quantifies the geometric information about $T$ that is available from \emph{either} source alone.
  High redundancy indicates that the two electronic readouts respond to the local geometry in a statistically similar manner.
  This can occur, for example, when a conservation law or strong constraint couples the two descriptors (e.g.\ net-charge conservation coupling two atomic charges, or sum rules linking related density partitions), so that each source carries largely overlapping information about the same geometric macrostate. 

  \item \emph{Unique information}~$U_X(s_0)$ [respectively $U_Y(s_0)$] measures what source $X$ [source $Y$] reveals about the target that the other source does not.
  By~\eqref{eq:mi-diff-unique}, the source carrying larger marginal mutual information with the target at~$s_0$ also carries the larger unique information.
  In chemically asymmetric settings, distinct profiles $U_X(s)$ and $U_Y(s)$ are consistent with a picture in which the most informative electronic rearrangement (as represented by the chosen descriptors) is localised more strongly on one partition than the other as the reaction progresses.
  Because ``source'' is defined by a partition of the electronic structure, this interpretation should be checked for stability under alternative but chemically sensible partitions (e.g.\ atom-vs-fragment) and partitioning schemes.

  \item \emph{Synergy}~$S(s_0)$ captures geometric information about $T$ that is inaccessible from either source alone but becomes available when both are observed jointly.
  A peak in $S(s_0)$ is consistent with a regime where reaction progress is encoded primarily in a \emph{joint electronic pattern} (e.g.\ a coupled change across two atoms/fragments, or complementary behaviour of two descriptors) rather than in either source individually.
  While this may reflect cooperative electronic reorganisation, synergy can also arise from non-mechanistic effects such as discretisation artefacts (bin-edge sensitivity), strong constraints that couple the sources, or target definitions that require combining complementary contributions from each source.
  Disentangling these possibilities requires comparison across bin resolutions, alternative target maps $g(\mathbf{R})$, and alternative descriptor/partition choices.
\end{enumerate}

\paragraph{Chemically motivated limiting cases.}
To anchor intuition with a chemically interpretable construction, consider an association motif $A^+ + B^- \rightarrow A$--$B$ and let $n_A(\mathbf{R}),n_B(\mathbf{R})$ denote any partitioned electronic populations on centres $A$ and $B$ (equivalently net charges $q_{A,B}=Z_{A,B}-n_{A,B}$), referenced to the separated-fragment values.
More precisely, fix a reference geometry $\mathbf{R}_{\mathrm{ref}}$ on the reactant-side asymptote (separated fragments) and define the population deviations
$\Delta n_A(\mathbf{R}) := n_A(\mathbf{R})-n_A(\mathbf{R}_{\mathrm{ref}})$ and $\Delta n_B(\mathbf{R}) := n_B(\mathbf{R})-n_B(\mathbf{R}_{\mathrm{ref}})$ (equivalently $\Delta q_A=-\Delta n_A$, $\Delta q_B=-\Delta n_B$).
Define binary electronic readouts by the \emph{sign} of these deviations,
\[
X:=\mathbf{1}\!\left[\Delta n_A>0\right],\qquad
Y:=\mathbf{1}\!\left[\Delta n_B>0\right],
\]
and let $T\in\{0,1\}$ denote a coarse-grained geometric/electronic macrostate (equiprobable for illustration).
Then three limiting mechanisms correspond to explicit conditional laws $p(x,y\mid t)$:
\begin{itemize}
  \item[(i)] \emph{Pure redundancy (transfer-dominated association).}
  In the association motif $A^+ + B^- \rightarrow A$--$B$, a natural idealisation is a \emph{two-centre transfer} picture in which the electronic change relevant to $T$ is essentially a redistribution between the two centres.
  To keep the toy model unbiased, let $T$ encode the direction of net transfer ($T=1$ for $A\rightarrow B$, $T=0$ for $B\rightarrow A$), corresponding to opposite-sign deviations $(\Delta n_A,\Delta n_B)$.
  In this idealised limit the transfer is confined to the two-centre partition, so electron-number conservation implies $\Delta n_A=-\Delta n_B$ and hence $(X,Y)$ is deterministic given $T$ (one deviation is positive and the other negative).
  Consequently $I(T;X)=I(T;Y)=I(T;X,Y)=1$ bit and the PID yields $\R=1$ bit with $U_X=U_Y=S=0$, i.e.\ both centres report the same transfer event in an informationally overlapping way.
  (In an actual ionic association $A^+ + B^-$ one transfer direction will dominate, but the redundancy logic is unchanged: either centre alone can diagnose the transfer-like mode.)

  \item[(ii)] \emph{Pure unique information (localised polarisation).}
  A second chemically meaningful limit is \emph{localised} electronic response during approach: the macrostate $T$ (e.g.\ ``associated'' vs.\ ``separated'', or ``activated'' vs.\ ``non-activated'') is reflected primarily in the population of one centre, while the other centre is effectively a spectator at the chosen resolution.
  A minimal model sets $X=T$ deterministically and takes $Y$ independent noise ($p(y\mid t)=\tfrac12$).
  Then $I(T;X)=1$, $I(T;Y)=0$, $I(T;X,Y)=1$, hence $U_X=1$ bit with $\R=U_Y=S=0$.

  \item[(iii)] \emph{Pure synergy (association encoded by a two-centre pattern).}
  Finally, association can be encoded not by a one-centre response but by a \emph{joint pattern} across centres, such as distinguishing a transfer-like (anti-phase) redistribution from a sharing/polarisation-like (in-phase) response.
  In a full molecular environment, the two-centre deviations need not sum to zero because other degrees of freedom (the bonding region or additional atoms/fragments) can supply/accept density; only $\sum_k \Delta n_k=0$ is enforced globally.
  Let $T$ label these two modes ($T=0$ anti-phase, $T=1$ in-phase) and idealise them by
  \[
    p(x,y\mid T=1)=\tfrac12\,\mathbf{1}_{\{x=y\}},\qquad
    p(x,y\mid T=0)=\tfrac12\,\mathbf{1}_{\{x\neq y\}},
  \]
  i.e.\ $T=1$ iff $\Delta n_A$ and $\Delta n_B$ have the same sign and $T=0$ iff they have opposite signs.
  In this construction each marginal is uninformative ($I(T;X)=I(T;Y)=0$), but the joint observation resolves the mode ($I(T;X,Y)=1$), so the PID gives $S=1$ bit with $\R=U_X=U_Y=0$.
\end{itemize}
These toy models show, in chemically interpretable terms, how redundancy corresponds to overlapping one-centre readouts of the same rearrangement, unique information to a predominantly localised response, and synergy to progress encoded primarily in a genuinely two-centre pattern that only becomes visible when both partitions are observed jointly.

\paragraph{Diagnostic value of the profile shapes.}
The variation of the PI parts along~$s$ provides richer diagnostic information than the mutual information $I(T;X,Y)$ alone.
Two regions of the IRC with the same total mutual information may differ qualitatively in their decomposition: a region dominated by redundancy ($\R \gg S$) implies that both sources track the geometry in parallel, whereas a region dominated by synergy ($S \gg \R$) implies that the geometric signal resides primarily in the \emph{joint} source distribution $p(x,y \mid t)$ rather than in either marginal $p(x \mid t)$ or $p(y \mid t)$.
Likewise, a shift from $U_X > U_Y$ to $U_Y > U_X$ along the IRC pinpoints where the dominant electronic rearrangement passes from one atom to the other.
These distinctions are invisible to any single-source or joint-information analysis and motivate the use of PID as a bonding diagnostic.
The symmetric case (Proposition~\ref{prop:symmetry}) is a special instance of this framework in which the constraint $U_X(s) = U_Y(-s)$ and $\R(s) = \R(-s)$ simplifies the profile shapes but does not alter the interpretive logic.

\subsection{Numerical evaluation of the local PID}
\label{sec:computational}

\begin{algorithm}[t]
\caption{Kernel-smoothed PID along an IRC}
\label{alg:pid-irc}
\begin{algorithmic}[1]
\Require IRC frames
  $\{(s_i,\, T_i^{\mathrm{raw}},\, q_{X,i},\, q_{Y,i})\}_{i=1}^{N}$;
  bin counts $n_T, n_X, n_Y$;
  grid size $G$; initial bandwidth $h_0$;
  $\mathrm{ESS}_{\min}$; $h_{\max}$;
  masking threshold $p_{\min}$;
  minimum occupied target bins $m_T$
\Ensure PID curves
  $\{\R(s_0),\, U_X(s_0),\, U_Y(s_0),\, S(s_0)\}$
  on a uniform grid;
  reliability flag per grid point
\Statex
\State Assign branch-equalising weights
  $\kappa_i \gets 1/N_b$  \Comment{$N_b$ = branch frame count for frame~$i$}
\State Bin observables (using globally fixed bin edges):
  $T_i \gets \mathrm{bin}(T_i^{\mathrm{raw}})$,\;
  $X_i \gets \mathrm{bin}(q_{X,i})$,\;
  $Y_i \gets \mathrm{bin}(q_{Y,i})$
\State Encode each frame as a one-hot vector;
  stack into $\mathbf{O} \in \{0,1\}^{N \times (n_T n_X n_Y)}$
\For{each grid point $s_0$ in
     $\mathrm{linspace}(s_{\min}, s_{\max}, G)$}
  \State $h \gets h_0$;\; $\mathrm{reliable} \gets \mathrm{true}$
  \Repeat \Comment{Adaptive bandwidth (Eq.~\ref{eq:ess})}
    \State $w_i \gets \kappa_i\,
      \exp\!\bigl(-(s_0 - s_i)^2 / 2h^2\bigr)$
    \State $\mathbf{w} \gets (w_i)_{i=1}^N$
    \State $\mathrm{ESS} \gets
      (\textstyle\sum_i w_i)^2 \big/ \sum_i w_i^2$
    \State $\mathbf{j} \gets \mathbf{w}^\top \mathbf{O}$;\;
      $p(t,x,y) \gets \mathbf{j}/\|\mathbf{j}\|_1$
      \Comment{Eq.~\ref{eq:local-joint}}
    \State $p(t) \gets \sum_{x,y} p(t,x,y)$
    \State $\mathcal{T} \gets \{t : p(t) \ge p_{\min}\}$
      \Comment{Mask negligible target bins}
    \If{$\mathrm{ESS} < \mathrm{ESS}_{\min}$
        \textbf{or} $|\mathcal{T}| < m_T$}
      \State $h \gets \min(1.25\,h,\; h_{\max})$
    \EndIf
  \Until{($\mathrm{ESS} \ge \mathrm{ESS}_{\min}$
         \textbf{and} $|\mathcal{T}| \ge m_T$)
         \textbf{or} $h = h_{\max}$}
  \If{$|\mathcal{T}| < m_T$}
    \State $\mathrm{reliable} \gets \mathrm{false}$
  \EndIf
  \State Restrict and renormalise:
    $\tilde{p}(t,x,y) \gets p(t,x,y)/Z$ for $t \in \mathcal{T}$,\;
    $\tilde{p}(t,x,y) \gets 0$ for $t \notin \mathcal{T}$,\;
    where $Z = \sum_{t' \in \mathcal{T},\, x,\, y} p(t',x,y)$
    \Comment{Eq.~\ref{eq:renorm}}
  \State Compute $I(T;X)$, $I(T;Y)$, $I(T;X,Y)$
    from marginals of $\tilde{p}$
    \Comment{Eqs.~\ref{eq:mi}--\ref{eq:mi-joint}}
  \State Compute $\Ispec(t;X)$, $\Ispec(t;Y)$ for each $t \in \mathcal{T}$
    from $\tilde{p}$
    \Comment{Eq.~\ref{eq:Ispec}}
  \State $\R \gets \sum_{t \in \mathcal{T}} \tilde{p}(t)\,
    \min(\Ispec(t;X),\, \Ispec(t;Y))$
    \Comment{Eq.~\ref{eq:Imin}}
  \State $U_X \gets I(T;X) - \R$;\,
         $U_Y \gets I(T;Y) - \R$;\,
         $S \gets I(T;X,Y) - I(T;X) - I(T;Y) + \R$
         \Comment{Eq.~\ref{eq:UUXYS}}
\EndFor
\end{algorithmic}
\end{algorithm}

The local PID of Section~\ref{sec:pid-irc} requires a kernel-smoothed joint distribution at every query point along the IRC, from which the Williams--Beer decomposition is extracted. Algorithm~\ref{alg:pid-irc} summarises the full procedure; the remainder of this subsection explains its key steps and the design choices behind them.
The algorithm is formulated for a general reaction with an arbitrary number of IRC frames; optional specialisations for symmetric reactions are noted where relevant.

\paragraph{Step~1: Arc-length parameterisation (line~1).}
A meaningful kernel smoother requires an IRC coordinate~$s$ whose spacing reflects physical displacement, not the arbitrary step count of the electronic-structure optimiser. 
Each IRC frame~$i$ is therefore assigned a coordinate~$s_i$ equal to the cumulative Cartesian RMSD displacement from the transition state (TS).
The arc length is computed separately on each branch: denoting the TS as frame~0 and the $k$-th frame along a given branch as frame~$k$, consecutive frames are rigidly aligned (translation and rotation removed via Kabsch superposition~\cite{Kabsch1976}) and each increment is accumulated as
\begin{equation}
  |s_k| = \sum_{j=1}^{k}
  \sqrt{\frac{1}{N_{\mathrm{at}}}
        \sum_{\alpha=1}^{N_{\mathrm{at}}}
        |\mathbf{r}_\alpha^{(j)} - \mathbf{r}_\alpha^{(j-1)}|^2}\,,
  \label{eq:arc-length}
\end{equation}
where the Kabsch alignment is applied to each pair $(j-1,j)$ independently (i.e.\ it does not chain across the full branch, avoiding drift).
The sign of $s_k$ is chosen positive on the forward branch and negative on the backward branch, and the two branches are then concatenated with the TS at $s = 0$.
Because the two branches of an IRC calculation need not contain the same number of frames, nor the same step sizes, each frame receives a branch-equalising weight $\kappa_i = 1/N_b$, where $N_b$ is the number of frames in the branch to which frame~$i$ belongs.
Each branch thus carries total weight~1, so that neither branch dominates the kernel estimator regardless of differences in step count or step size.
Optionally, the coordinate mapping from frame index to~$s$ may be resampled onto a uniform $s$-grid by cubic interpolation, yielding a strictly density-equalised dataset.
This resampling affects only the assignment of frames to $s$-coordinates; it does not interpolate or invent new electronic observables, which would introduce unphysical correlations.

\paragraph{Step~2: Discretisation (lines~2--3).}
The continuous observables are mapped to discrete random variables by binning.
The geometric target~$T$ is obtained by equal-width binning of a structural observable that characterises the reaction progress---typically a bond-length difference $\Delta d = d_X - d_Y$ for a bond-transfer reaction, though any suitable geometric descriptor (e.g.\ a dihedral angle, a coordination number) may serve.
The source variables $X = \mathrm{bin}(q_X)$ and $Y = \mathrm{bin}(q_Y)$ are obtained by equal-occupancy (quantile) binning of electronic observables, most commonly net atomic charges from a real-space partitioning scheme such as DDEC6~\cite{Manz2016}, Bader, or Hirshfeld.
Specifically, the bin edges for source~$X$ are the $(k/n_X)$-quantiles of $\{q_{X,i}\}_{i=1}^N$ (weighted by $\{\kappa_i\}$) for $k = 0, 1, \ldots, n_X$, so that each bin contains approximately equal total weight; source~$Y$ is treated analogously.
Quantile bins equalise the marginal occupancy of each source level, reducing sparse-cell bias in the specific-information estimates~\eqref{eq:Ispec}; equal-width bins are used for the target so that the bin centres have a direct geometric interpretation and, in symmetric reactions, the bin edges can be placed symmetrically about zero (Remark~\ref{rem:sym-bins}).
As noted in Remark~\ref{rem:geom-sources}, although all observables along a fixed IRC are deterministic, atomic charges typically have a non-monotone or many-to-one relationship with the geometric target at the chosen bin resolution, so that the discretised sources vary across frames sharing the same target bin---exactly the condition under which the kernel-windowed PID produces non-trivial structure.

A typical resolution is $n_T = n_X = n_Y = 5$, yielding a joint table of $5^3 = 125$ cells---coarse enough to keep statistical noise low at modest effective sample sizes, yet fine enough to resolve major bonding transitions.
Crucially, the bin edges are fixed globally (computed once from the full IRC dataset) and are not adapted per query point~$s_0$; this ensures that the PID values at different~$s_0$ are computed on the same discrete state space and are therefore directly comparable along the reaction path.
Finer grids (e.g.\ $n = 7$ or $9$) are possible when more IRC frames are available but provide diminishing returns once the relevant features of the charge profiles are captured.
Each frame~$i$ is then encoded as a one-hot vector over the $n_T \!\cdot\! n_X \!\cdot\! n_Y$ joint cells (Algorithm~\ref{alg:pid-irc}, line~3); stacking these vectors into a matrix~$\mathbf{O} \in \{0,1\}^{N \times (n_T n_X n_Y)}$ enables the local joint table to be assembled as a single matrix--vector product $\mathbf{j}(s_0) = \mathbf{w}^\top\!\mathbf{O}$ (line~9).

\begin{remark}[Bin construction for symmetric reactions]
\label{rem:sym-bins}
When the reaction possesses the permutation symmetry of Proposition~\ref{prop:symmetry}, two additional bin-construction conventions are advantageous.
First, target bin edges are chosen symmetric about zero, $b_k^T = -\,b_{n_T-k}^T$, which defines the target-bin mirror map~$\piT$.
Second, source bin edges are shared, $b_k^X = b_k^Y \equiv b_k^q$, constructed by pooling all $q_X$ and $q_Y$ values; this defines the source-bin mirror map~$\piX$.
Together these maps are the prerequisites for the symmetry results developed in the analytic model of Section~\ref{sec:toy-model} (Propositions~\ref{prop:U-equal} and~\ref{prop:U-zero-even}).
For an asymmetric reaction, the bin edges for $X$ and $Y$ may be chosen independently to match the distinct ranges of each electronic source.
\end{remark}

\paragraph{Step~3: Adaptive kernel bandwidth (lines~5--17).}
At each grid point~$s_0$, the Gaussian kernel $w_i(s_0) = \kappa_i\,\exp\!\bigl(-(s_0 - s_i)^2 / 2h^2\bigr)$ determines the local weight of every frame; the bandwidth~$h$ has units of arc length.
The bandwidth~$h$ controls the bias--variance trade-off: a narrow kernel resolves sharp features but may leave too few effective frames to populate the joint table reliably; a wide kernel smooths genuine structure.
Rather than fixing~$h$ globally, the algorithm adapts it at each grid point to satisfy two criteria simultaneously: (i)~a minimum Kish effective sample size~\cite{Kish1965},
\begin{equation}
  \mathrm{ESS}(s_0) \equiv
  \frac{\bigl(\sum_i w_i\bigr)^2}{\sum_i w_i^2}
  \;\ge\; \mathrm{ESS}_{\min},
  \label{eq:ess}
\end{equation}
and (ii)~a minimum number~$m_T$ of non-negligible target bins ($|\mathcal{T}| \ge m_T$, where $\mathcal{T} = \{t : p(t \mid s_0) \ge p_{\min}\}$).
Starting from an initial bandwidth $h_0 = 0.02\,(s_{\max} - s_{\min})$ (2\% of the IRC span), $h$~is increased by a factor of~1.25 at each iteration until both criteria are met or a ceiling $h_{\max} = 0.10\,(s_{\max} - s_{\min})$ is reached.
Because the bin-coverage check requires the target marginal $p(t \mid s_0)$, the local joint table~\eqref{eq:local-joint} is assembled inside the adaptive loop (lines~7--10) and re-evaluated at each bandwidth increment.
In practice this means the kernel is narrow near the transition state, where IRC frames are usually densely spaced, and widens automatically toward the reactant and product basins where frames are sparser.
The ESS criterion controls the total effective weight but does not guarantee that every joint cell $(t,x,y)$ is adequately populated; in particular, peripheral target bins may remain sparsely occupied even at large ESS.
Target bins whose marginal weight $p(t \mid s_0)$ falls below a threshold $p_{\min} = 10^{-12}$ are \emph{masked}: they are excluded from the subsequent PID computation.
The secondary check requires at least $m_T$ target bins to carry non-negligible weight (a typical default is $m_T = 2$, the minimum needed for meaningful specific-information comparisons; setting $m_T$ closer to~$n_T$ imposes stricter coverage at the cost of possible oversmoothing near the endpoints).
If this condition cannot be met at~$h_{\max}$, the grid point is flagged as unreliable (line~16).

\paragraph{Step~4: Renormalisation and PID evaluation
(lines~18--22).}
With the adaptive loop converged, the local joint distribution is restricted to the retained target bins~$\mathcal{T}$ and renormalised to a proper probability distribution:
\begin{equation}
  \tilde{p}(t,x,y \mid s_0) =
  \frac{p(t,x,y \mid s_0)}%
       {\sum_{t' \in \mathcal{T}} \sum_{x',y'} p(t',x',y' \mid s_0)},
  \qquad t \in \mathcal{T}.
  \label{eq:renorm}
\end{equation}
All subsequent information-theoretic quantities
\begin{equation*}
    I(T;X),\, I(T;Y),\, I(T;X,Y)\,\Ispec(t;X),\, \text{and } \Ispec(t;Y)
\end{equation*}
are computed from the marginals of~$\tilde{p}$, and the Williams--Beer $\Imin$~\eqref{eq:Imin} yields~$\R$.
Because $\tilde{p}$ is a valid distribution (i.e.\ $\sum_{t,x,y} \tilde{p} = 1$), the decomposition identity $I(T;X,Y) = \R + U_X + U_Y + S$ holds exactly over the retained state space.
Note that the effective target alphabet is $\mathcal{T}$, not $\{1,\ldots,n_T\}$; consequently $H(T)$ computed from~$\tilde{p}$ is bounded by $\log_2|\mathcal{T}|$ rather than $\log_2 n_T$, and PID values at grid points with different $|\mathcal{T}|$ are not strictly on the same scale.
This is preferable to adding a pseudocount, which would introduce a systematic bias toward uniformity in~$\Ispec$; the trade-off is that heavy masking (many empty target bins) effectively reduces the dynamic range of~$T$ and can underestimate both redundancy and mutual information.
The remaining components $U_X$, $U_Y$, $S$ follow from~\eqref{eq:UUXYS} (line~22).
Because the entire batch of grid points shares the same one-hot matrix~$\mathbf{O}$, the full PID curve is obtained by a single weight-matrix--one-hot product followed by vectorised PID evaluations, giving a total complexity of $O(G \cdot N + G \cdot n_T n_X n_Y)$ for $G$~grid points.
For typical parameters ($G = 201$, $N \approx 200$, $n_T = n_X = n_Y = 5$), the entire sweep completes in milliseconds to seconds on a modern workstation.

}{\tbd}
\IfFileExists{h2.tex}{
\section{Analytic model: symmetric atom transfer}
\label{sec:toy-model}

To illustrate the general framework of Section~\ref{sec:theory} in a solvable setting, this section develops a minimal analytic model for a symmetric atom-transfer reaction of the form
$\mathrm{X} + \mathrm{A\text{-}Y} \to \mathrm{X\text{-}A}
+ \mathrm{Y}$,
where A is a central (anchor) atom and X, Y are two equivalent terminal atoms.
The simplest of such reaction is the collinear hydrogen exchange reaction
$\mathrm{H} + \mathrm{H}_2 \to \mathrm{H}_2 + \mathrm{H}$.
While the tri-hydrogen system also admits formally charge-separated (ionic) asymptotes such as $\mathrm{H}^+ + \mathrm{H}_2^{-}$ (and the permuted counterpart $\mathrm{H}^- + \mathrm{H}_2^{+}$), these channels involve metastable anionic/ionic character and, in general, require a multi-state diabatic treatment with non-adiabatic couplings rather than a single-surface closed-form model.
Since our purpose here is a transparent analytic demonstration of the PID construction (not a quantitatively complete description of all reactive/charge-transfer channels), we therefore restrict to the lowest, symmetry-adapted neutral exchange picture along the collinear MEP.

The model is constructed to match the anchor-based geometry target $T = \mathrm{bin}(d_X - d_Y)$ used in the computational pipeline; a pure diatomic (two-centre) reaction has no natural pair of electronic sources and its PID collapses toward pure redundancy unless an independent second electronic observable (e.g.\ bond order or ELF) supplements the charge.

\subsection{Valence-bond Hamiltonian}
\label{sec:hamiltonian}

The reaction coordinate is the bond-length asymmetry
$\xi \equiv d_X - d_Y$, where $d_X = |\mathbf{r}_X - \mathbf{r}_A|$
and $d_Y = |\mathbf{r}_Y - \mathbf{r}_A|$ are the distances from each terminal atom to the anchor.
At the transition state, $\xi = 0$ and $d_X = d_Y$ by symmetry; the reactant basin corresponds to $\xi > 0$ ($\mathrm{X}$ far, $\mathrm{A\text{-}Y}$ bonded) and the product basin to $\xi < 0$ ($\mathrm{X\text{-}A}$ bonded, $\mathrm{Y}$ far).

In the minimal valence-bond (VB) picture~\cite{Shaik2007}, the two dominant electronic configurations are
\begin{align}
  \ket{L} &= \ket{\mathrm{X}^{\,\cdot} \;\cdots\;
              \mathrm{A\text{-}Y}} &
  &\text{(left radical, right bond)},\notag\\
  \ket{R} &= \ket{\mathrm{X\text{-}A} \;\cdots\;
              \mathrm{Y}^{\,\cdot}} &
  &\text{(left bond, right radical)}.
  \label{eq:VB-states}
\end{align}

A complete VB description augments the covalent (radical) structures~\eqref{eq:VB-states}
with charge-separated (ionic) configurations.  In the present notation these may be written schematically as
\begin{align}
  \ket{I_L} &= \ket{\mathrm{X}^{-} \;\cdots\; \mathrm{A\text{-}Y}^{+}} &
  &\text{(left anion, right cation)},\notag\\
  \ket{I_R} &= \ket{\mathrm{X\text{-}A}^{+} \;\cdots\; \mathrm{Y}^{-}} &
  &\text{(left cation, right anion)},\notag
\end{align}
(and permutations thereof; for the collinear $\mathrm{H}+\mathrm{H}_2$ prototype these correspond to the familiar
$\mathrm{H}^{-}+\mathrm{H}_2^{+}$ and $\mathrm{H}^{+}+\mathrm{H}_2^{-}$ resonance forms).
Including $\{\ket{I_L},\ket{I_R}\}$ enlarges the VB Hamiltonian but does not alter the symmetry structure of the atom-transfer problem; when the ionic states are energetically well separated along the neutral MEP, they can be adiabatically eliminated (e.g.\ by L\"owdin/Schur downfolding \cite{schur1917potenzreihen,zhang2006schur,lowdin1964studies,feshbach1962unified}), yielding an effective two-state model in the $\{\ket{L},\ket{R}\}$ subspace with renormalised diabatic energies and couplings.
For clarity and analytic tractability, we therefore work in the minimal covalent subspace, noting that ionic corrections can be incorporated straightforwardly by reparameterising the effective $2\times 2$ model.

The diabatic energy of $\ket{L}$ is lower when $\xi > 0$ (the A--Y bond is intact); that of $\ket{R}$ is lower when $\xi < 0$.
Linearising the energy gap about the crossing $\xi = 0$ gives the Landau--Zener Hamiltonian~\cite{Landau1932,Zener1932} in the basis $\{\ket{L}, \ket{R}\}$:
\begin{equation}
  \hat{H}(\xi) =
  \begin{pmatrix}
    -\Delta\xi/2 & V \\[3pt]
    V & +\Delta\xi/2
  \end{pmatrix},
  \label{eq:hamiltonian}
\end{equation}
where $\Delta > 0$ is the slope of the diabatic energy gap and $V > 0$ is the resonance coupling between the two VB structures. Diagonalisation yields eigenvalues $E_\pm = \pm\sqrt{(\Delta\xi/2)^2 + V^2}$ and a ground-state weight on $\ket{R}$:
\begin{equation}
  w(\xi) = \frac{1}{2}\Bigl(1
  - \frac{\xi}{\sqrt{\xi^2 + \lambda^2}}\Bigr),
  \qquad \lambda \equiv \frac{2V}{\Delta},
  \label{eq:weight}
\end{equation}
where $\lambda$ is the characteristic width of the avoided crossing in $\xi$-space.
The ground state is 
$\ket{\Psi} = \sqrt{1-w}\,\ket{L} + \sqrt{w}\,\ket{R}$,
so $w \to 0$ for $\xi \gg \lambda$ (reactant basin, dominated by $\ket{L}$) and $w \to 1$ for $\xi \ll -\lambda$ (product basin, dominated by $\ket{R}$).

To connect the VB weights to atomic charges, each VB state is assigned charge parameters reflecting the bond polarity.
In the free-radical configuration of atom~X (as in $\ket{L}$), the atom carries a reference charge~$q_0$; in the bonded configuration (as in $\ket{R}$), electron density shifts toward the bond region and the charge becomes $q_0 - \delta$.
The charge transfer parameter $\delta > 0$ encodes the magnitude of electron redistribution upon bond formation.
The ground-state mean charges on the two terminal atoms are then
\begin{align}
  \qX(\xi) &= q_0 - \delta\,w(\xi),\notag\\
  \qY(\xi) &= q_0 - \delta\,\bigl(1 - w(\xi)\bigr),
  \label{eq:charge-profiles}
\end{align}
and the anchor charge is
$\bar{q}_A(\xi) = -2q_0 + \delta$ (constant in the two-state model, fixed by overall charge neutrality $\qX + \qY + \bar{q}_A = 0$; one may set $q_0 = \delta/2$ so that all three mean charges vanish at the transition state without loss of generality).
The charge profiles $\qX$ and $\qY$ are both sigmoids of width $\lambda$, related by the reaction symmetry: $\qX(\xi) = \qY(-\xi)$.

\begin{remark}[Charge-sum constraint]
\label{rem:charge-sum}
In this minimal two-state model, $\qX(\xi) + \qY(\xi) = 2q_0 - \delta$ for all~$\xi$: the sum of the terminal charges is constant, with the complementary charge residing on the anchor.
In a richer electronic model (or in a real quantum-chemical calculation), the anchor charge~$\bar{q}_A$ also varies with $\xi$, relaxing this constraint and introducing genuine independence between $\qX$ and~$\qY$.
\end{remark}

\subsection{Joint distribution and the PID}
\label{sec:joint-pid}

Following the framework of Section~\ref{sec:pid-irc}, the three PID variables are: the target $T = \mathrm{bin}(\xi) \in \{1,\ldots,n_T\}$, discretised with symmetric bin edges
$b_k^T = -\,b_{n_T - k}^T$ for $k = 0,\ldots,n_T$;\;
and the sources
$X = \mathrm{bin}(q_X^{\mathrm{obs}}) \in \{1,\ldots,n_X\}$,
$Y = \mathrm{bin}(q_Y^{\mathrm{obs}}) \in \{1,\ldots,n_Y\}$
with shared symmetric charge-bin edges $b_k^X = b_k^Y \equiv b_k^q$ satisfying $b_k^q = -\,b_{n_X - k}^q$ (after centering on the midpoint charge).
The observed charges are
\begin{equation}
  q_X^{\mathrm{obs}} = \qX(\xi) + \eta_X,\qquad
  q_Y^{\mathrm{obs}} = \qY(\xi) + \eta_Y,\qquad
  \eta_{X,Y} \stackrel{\mathrm{iid}}{\sim}
  \mathcal{N}(0,\sigma_q^2).
  \label{eq:obs-model}
\end{equation}
The noise width $\sigma_q$ models the spread inherent in any charge-partitioning scheme (Bader, DDEC, Hirshfeld, etc.)~\cite{Manz2016} as well as electronic degrees of freedom beyond the two-state model---in particular, the variation of the anchor charge $\bar{q}_A$ that the minimal VB picture holds fixed.
Taking the two noise terms independent is the key modelling choice: it represents the regime in which the two terminal charges carry genuinely independent information, as occurs in practice when the anchor charge varies or when the sources are distinct electronic observables (e.g.\ a charge and a bond order).

At a query point~$\xi_0$ along the reaction coordinate, a Gaussian kernel of bandwidth~$h$ defines the local density $\rho(\xi \mid \xi_0) = \mathcal{N}(\xi_0, h^2)$.
(We keep $h$ fixed in the analytic model; in the numerical pipeline $h$ is chosen adaptively to satisfy a minimum effective sample size.)
Conditional independence of $X$ and $Y$ given~$\xi$ (a consequence of the i.i.d.\ noise model) then allows the local joint probability to factorise into a one-dimensional integral:
\begin{equation}
  P(t,i,j \mid \xi_0) = \int_{I_t}
  \rho(\xi \mid \xi_0)\;
  P(X\!=\!i \mid \xi)\;
  P(Y\!=\!j \mid \xi)\;d\xi,
  \label{eq:joint}
\end{equation}
where $I_t = [b_{t-1}^T, b_t^T)$ is the $t$-th target bin and the conditional bin probabilities are
\begin{align}
  P(X\!=\!i \mid \xi)
  &= \Phi\!\Bigl(\frac{b_i^q - \qX(\xi)}{\sigma_q}\Bigr)
  - \Phi\!\Bigl(\frac{b_{i-1}^q - \qX(\xi)}{\sigma_q}\Bigr),
  \label{eq:PX}\\
  P(Y\!=\!j \mid \xi)
  &= \Phi\!\Bigl(\frac{b_j^q - \qY(\xi)}{\sigma_q}\Bigr)
  - \Phi\!\Bigl(\frac{b_{j-1}^q - \qY(\xi)}{\sigma_q}\Bigr),
  \label{eq:PY}
\end{align}
with $\Phi$ the standard normal CDF.
This factorisation is a modelling assumption: in real electronic structure calculations, residual degrees of freedom that cause the charges to fluctuate also tend to correlate them (and charge conservation pushes toward anti-correlation).
Relaxing the conditional independence introduces a nontrivial copula $P(X,Y \mid \xi)$ and changes the PID quantitatively.

The integrand is a product of a Gaussian density and differences of error functions, so each of the $n_T \cdot n_X \cdot n_Y$ entries of the joint table can be evaluated to arbitrary precision by adaptive quadrature.

The reaction symmetry $\qX(\xi) = \qY(-\xi)$ constrains the PID.
Let $\piT(k) = n_T + 1 - k$ denote the target-bin mirror map and $\piX(k) = n_X + 1 - k$ the source-bin mirror map.
The symmetric charge bins give
$P(Y\!=\!j \mid \xi) = P(X\!=\!\piX(j) \mid -\xi)$;
combined with the symmetric target bins, a change of variable $\xi \to -\xi$ in~\eqref{eq:joint} yields $p_{TY}(t,j) = p_{TX}(\piT(t),\, \piX(j))$.
Because mutual information and specific information are invariant under invertible relabelling, this gives:
\begin{equation}
  I(T;X) = I(T;Y),\qquad
  \Ispec(t;X) = \Ispec(\piT(t);Y) \quad\text{for every }t.
  \label{eq:symm-Ispec}
\end{equation}

Substituting~\eqref{eq:symm-Ispec} into the redundancy~\eqref{eq:Imin} and using $p(t) = p(\piT(t))$ (from symmetric target bins):
\begin{equation}
  \R = \sum_t p(t)\,
  \min\!\big(\Ispec(t;X),\;\Ispec(\piT(t);X)\big).
  \label{eq:R-mirror}
\end{equation}
This equals $I(T;X)$ if and only if $\Ispec(t;X) = \Ispec(\piT(t);X)$ for every~$t$---i.e., the specific information of~$X$ is invariant under the target-bin reflection.
Generically, this invariance does not hold: atom~X is more informative about the side of the reaction coordinate on which it forms a bond ($\xi < 0$) than about the opposite side
($\xi > 0$), so $\Ispec(t;X)$ is larger for bins with $t$ indexing negative~$\xi$ than for their mirror images.
The signed-target PID therefore has equal but \emph{nonzero} unique informations:

\begin{proposition}[Equal unique information for signed target]
\label{prop:U-equal}
Under the reaction symmetry $\qX(\xi) = \qY(-\xi)$, i.i.d.\ observation noise, and shared symmetric bins with $T = \mathrm{bin}(\xi)$:
\begin{equation}
  U_X = U_Y = I(T;X) - \R \;\ge\; 0,
  \label{eq:U-equal}
\end{equation}
where $\R$ is given by~\eqref{eq:R-mirror}.
The four-term decomposition reads
\begin{equation}
  I(T;X,Y) = \R + 2U + S,
  \label{eq:RUST}
\end{equation}
with $U \equiv U_X = U_Y$ and $S = I(T;X,Y) - 2\,I(T;X) + \R$.
\end{proposition}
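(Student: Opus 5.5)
The plan is to derive everything from the symmetry identities~\eqref{eq:symm-Ispec} and the algebraic definitions~\eqref{eq:UUXYS}, so the proof is mostly bookkeeping.

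\textbf{Step 1: the symmetry relation.} First I would establish $p_{TY}(t,j)=p_{TX}(\piT(t),\piX(j))$ carefully. Writing out~\eqref{eq:joint} marginalised over $X$ gives
\[
p_{TY}(t,j)=\int_{I_t}\rho(\xi\mid\xi_0)\,P(Y\!=\!j\mid\xi)\,d\xi .
\]
I would then substitute $\xi\to-\xi$, map $I_t$ to $I_{\piT(t)}$ using the symmetric target edges, and use $P(Y\!=\!j\mid\xi)=P(X\!=\!\piX(j)\mid-\xi)$. The last identity follows from $\qY(\xi)=\qX(-\xi)$, the centred symmetric charge edges, and the symmetry of $\Phi$.

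This step needs the kernel density to satisfy $\rho(-\xi\mid\xi_0)=\rho(\xi\mid\xi_0)$. That holds only at $\xi_0=0$; for general $\xi_0$ the substitution instead relates $\xi_0$ to $-\xi_0$, as in Proposition~\ref{prop:symmetry}. I would therefore state the result at the symmetric query point $\xi_0=0$, or for a symmetric global weighting, which is the setting in which~\eqref{eq:symm-Ispec} and $p(t)=p(\piT(t))$ were asserted. Confirming this hypothesis is the main subtlety I expect.

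\textbf{Step 2: the unique informations.} Given~\eqref{eq:symm-Ispec}, we have $I(T;X)=I(T;Y)$. From~\eqref{eq:UUXYS}, $U_X=I(T;X)-\R$ and $U_Y=I(T;Y)-\R$, so $U_X=U_Y$ immediately. Nonnegativity follows from $\min(\Ispec(t;X),\Ispec(t;Y))\le\Ispec(t;X)$ combined with $I(T;X)=\sum_t p(t)\Ispec(t;X)$. The mirror form~\eqref{eq:R-mirror} of $\R$ is obtained by replacing $\Ispec(t;Y)$ with $\Ispec(\piT(t);X)$; I would check this by applying~\eqref{eq:symm-Ispec} at the point $\piT(t)$ and using $\piT^2=\mathrm{id}$.

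\textbf{Step 3: the four-term decomposition.} Substituting into~\eqref{eq:pid} gives $I(T;X,Y)=\R+2U+S$. Since $I(T;Y)=I(T;X)$, the synergy definition in~\eqref{eq:UUXYS} reduces to $S=I(T;X,Y)-2I(T;X)+\R$.

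No further obstacle arises beyond Step 1.
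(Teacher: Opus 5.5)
Your proposal is correct and takes the same route as the paper: its proof combines \eqref{eq:symm-Ispec} with $U_X=I(T;X)-\R$ and $U_Y=I(T;Y)-\R$, and uses the bound $\min(\Ispec(t;X),\Ispec(t;Y))\le\Ispec(t;X)$. Your Step~1 caveat is valid and the paper glosses over it: the substitution $\xi\to-\xi$ in \eqref{eq:joint} turns $\rho(\cdot\mid\xi_0)$ into $\rho(\cdot\mid-\xi_0)$, so the reflection argument gives \eqref{eq:symm-Ispec} and $p(t)=p(\piT(t))$ only at $\xi_0=0$ (or under a symmetric weighting), and at other points only $U_X|_{\xi_0}=U_Y|_{-\xi_0}$. (In the specific two-state model the restriction can be lifted differently: $\qX+\qY$ is constant (Remark~\ref{rem:charge-sum}) and the shared bins are centred on the midpoint charge, so $P(Y\!=\!j\mid\xi)=P(X\!=\!\piX(j)\mid\xi)$ at the \emph{same} $\xi$, giving $\Ispec(t;Y)=\Ispec(t;X)$ and hence $U_X=U_Y=0$ at every query point, consistent with the caption of Figure~\ref{fig:h2-mi}.)
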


\begin{proof}
Equation~\eqref{eq:symm-Ispec} and $U_X = I(T;X) - \R$, $U_Y = I(T;Y) - \R$ immediately give $U_X = U_Y$. Non-negativity follows from the minimum in~\eqref{eq:Imin} never exceeding $\Ispec(t;X)$.
\end{proof}

\noindent
The decomposition~\eqref{eq:RUST} is the natural form for a symmetric reaction with a signed target: each charge carries unique information about the side of the reaction it participates in, and both carry it in equal measure.

When unique information is not of interest, or when the target is an \emph{unsigned} observable, a stronger result holds:

\begin{proposition}[Vanishing unique information for even target]
\label{prop:U-zero-even}
Under the reaction symmetry $\qX(\xi) = \qY(-\xi)$, i.i.d.\ observation noise, and shared symmetric charge bins (so that the source-bin mirror map $\piX$ exists): if $T = \mathrm{bin}(|\xi|)$ (an even function of $\xi$), then $\Ispec(t;X) = \Ispec(t;Y)$ for every~$t$, and hence $\R = I(T;X) = I(T;Y)$, $U_X = U_Y = 0$, and
\begin{equation}
  I(T;X,Y) = \R + S,\qquad
  S = I(T;X,Y) - I(T;X) \ge 0.
  \label{eq:RS-even}
\end{equation}
\end{proposition}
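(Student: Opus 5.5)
The plan is to repeat the change-of-variable argument that gave \eqref{eq:symm-Ispec}, but now for a target that is invariant under $\xi\mapsto-\xi$, so that no target-bin relabelling enters. Write $T=\mathrm{bin}(|\xi|)$ with bins $J_t\subset[0,\infty)$, and let $\tilde J_t=\{\xi:|\xi|\in J_t\}$ be their preimages. Each $\tilde J_t$ is symmetric under $\xi\mapsto-\xi$. The local joint table is then
\[
p_{TX}(t,i)=\int_{\tilde J_t}\rho(\xi\mid\xi_0)\,P(X\!=\!i\mid\xi)\,d\xi .
\]
The shared symmetric charge bins together with $\qX(\xi)=\qY(-\xi)$ give $P(Y\!=\!j\mid\xi)=P(X\!=\!\piX(j)\mid-\xi)$, as already established before \eqref{eq:symm-Ispec}. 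Substituting $\xi\to-\xi$ in $p_{TY}(t,j)$ leaves the domain $\tilde J_t$ unchanged, so we obtain
\[
p_{TY}(t,j)=\int_{\tilde J_t}\rho(-\xi\mid\xi_0)\,P(X\!=\!\piX(j)\mid\xi)\,d\xi .
\]

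The key step, and the one I expect to be the main obstacle, is replacing $\rho(-\xi\mid\xi_0)$ by $\rho(\xi\mid\xi_0)$. This is only valid when the kernel is itself even, that is at $\xi_0=0$, or for a global (symmetric) weighting. For a general query point the best one gets is the pairing $\rho(-\xi\mid\xi_0)=\rho(\xi\mid-\xi_0)$, which relates $Y$ at $\xi_0$ to $X$ at $-\xi_0$, in the spirit of Proposition~\ref{prop:symmetry}. My first move is therefore to state explicitly that the conclusion is read at the symmetric point, or for a symmetric weighting $\rho(\xi)=\rho(-\xi)$, which I take to be the intended setting of the statement. Under that hypothesis we get
\[
p_{TY}(t,j)=p_{TX}(t,\piX(j)).
\]
Because $\piX$ is a bijection of the source alphabet and $T$ is not relabelled at all, this gives $p(y\mid t)=p(x{=}\piX(y)\mid t)$ and $p(y)=p(x{=}\piX(y))$. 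Reindexing the sum in \eqref{eq:Ispec} then gives $\Ispec(t;X)=\Ispec(t;Y)$ for every $t$, and in particular $I(T;X)=I(T;Y)$.

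With the specific informations equal bin by bin, the minimum in \eqref{eq:Imin} is attained by either argument. Hence $\R=\sum_t p(t)\Ispec(t;X)=I(T;X)=I(T;Y)$, and \eqref{eq:UUXYS} gives $U_X=U_Y=0$. The synergy becomes $S=I(T;X,Y)-2I(T;X)+I(T;X)=I(T;X,Y)-I(T;X)$, and it is nonnegative by the chain rule, since $I(T;X,Y)\ge I(T;X)$ was noted in Section~\ref{sec:pid-defs}. Substituting these values into \eqref{eq:pid} yields \eqref{eq:RS-even}. Note that the i.i.d.\ noise assumption is needed only so that the $X$- and $Y$-marginal channels have the same form; the factorisation \eqref{eq:joint} is not used for the marginals beyond that.
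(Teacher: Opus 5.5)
Your route is the paper's. The $|\xi|$-bins are mirror-invariant, so $\piT=\mathrm{id}$; the relabelling identity behind \eqref{eq:symm-Ispec} then gives $\Ispec(t;X)=\Ispec(t;Y)$, and $\R=I(T;X)$, $U_X=U_Y=0$, $S=I(T;X,Y)-I(T;X)\ge 0$ follow exactly as you write.

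Your kernel caveat is correct, and the paper's one-line proof passes over it. Substituting $\xi\to-\xi$ in \eqref{eq:joint} turns $\rho(\xi\mid\xi_0)$ into $\rho(\xi\mid-\xi_0)$, so reaction symmetry alone only pairs $\xi_0$ with $-\xi_0$, as in Proposition~\ref{prop:symmetry}. With nothing beyond the three listed hypotheses, the restriction really is needed. Suppose one only knew $\qY(\xi)=\qX(-\xi)$, with say $\qX$ flat for $\xi<0$ and increasing for $\xi>0$. In a window a few $h$ to the right of the origin, the law of $Y$ barely moves while $X$ resolves the $|\xi|$-bins, so generically $I(T;X)\neq I(T;Y)$.

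One correction to the relation you import from the discussion preceding \eqref{eq:symm-Ispec}. With shared bins, $\qY(\xi)=\qX(-\xi)$ gives $P(Y\!=\!j\mid\xi)=P(X\!=\!j\mid-\xi)$ directly, with no $\piX$. Inserting the mirror as well would force the conditional bin law of $Y$ to be symmetric about the midpoint charge, which fails for $\xi\neq 0$. At $\xi_0=0$ your computation goes through unchanged with the identity in place of $\piX$ and gives $p_{TY}=p_{TX}$, so your restricted claim stands.

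Where the proposal falls short is scope. The proposition concerns the local table \eqref{eq:joint} at an arbitrary query point $\xi_0$. Reading it only at $\xi_0=0$, or for an even weighting, proves less than is stated. The ingredient that removes the restriction is the charge-sum constraint of Remark~\ref{rem:charge-sum}, not the reaction symmetry.

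Since $\qX(\xi)+\qY(\xi)=2q_0-\delta$, reflecting $q_Y^{\mathrm{obs}}$ about the midpoint charge $q_0-\delta/2$ gives $\qX(\xi)-\eta_Y$. This has the same law as $q_X^{\mathrm{obs}}$, because the noise is centred Gaussian of common width. With bins symmetric about that midpoint, this reads $P(Y\!=\!j\mid\xi)=P(X\!=\!\piX(j)\mid\xi)$ at the \emph{same} $\xi$. Integrating over any target bin against any weighting then yields $p_{TY}(t,j\mid\xi_0)=p_{TX}(t,\piX(j)\mid\xi_0)$ for every $\xi_0$. Hence $\Ispec(t;X)=\Ispec(t;Y)$ pointwise along the coordinate, without any evenness of the kernel, and the rest of your argument applies verbatim.

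This step does not use the evenness of $T$ either. Within this model it therefore also forces $U_X=U_Y=0$ for the signed target. That is consistent with the vanishing unique contributions reported for Fig.~\ref{fig:h2-mi}, though not with the remark following \eqref{eq:R-mirror}.
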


\begin{proof}
For an even target $T = \mathrm{bin}(|\xi|)$ with edges
$0 = b_0 < b_1 < \cdots < b_{n_T}$, 
each target bin corresponds to the union
$I_t = [-b_t, -b_{t-1}) \cup [b_{t-1}, b_t)$.
This region is invariant under $\xi \to -\xi$, so
$\piT = \mathrm{id}$.
Then~\eqref{eq:symm-Ispec} becomes $\Ispec(t;X) = \Ispec(t;Y)$ for all~$t$, and the minimum in~\eqref{eq:Imin} is attained at equality.
Non-negativity of $S$ follows from $I(T;X,Y) \ge I(T;X)$.
\end{proof}

\begin{corollary}[Exact charge conservation collapses synergy]
\label{cor:conservation}
If the observed charges satisfy $q_X^{\mathrm{obs}} + q_Y^{\mathrm{obs}} = C$ exactly (zero independent noise) and the charge bins are shared and symmetric, then $Y = \piX(X)$ almost surely, so $I(T;X,Y) = I(T;X)$ and hence $S = 0$, $U_X = U_Y = 0$.
\end{corollary}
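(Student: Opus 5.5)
The plan is to reduce everything to the deterministic-relabelling situation already handled in Proposition~\ref{prop:det-source} and in the invariance arguments behind~\eqref{eq:symm-Ispec}. I would first establish that $Y=\piX(X)$ almost surely, and then read off the information quantities.

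\textbf{Step 1: the bin identity.} Work in the centred charge coordinate used for the shared bins, i.e.\ measure charges relative to the midpoint $C/2$. Then $q_X^{\mathrm{obs}}+q_Y^{\mathrm{obs}}=C$ becomes $\tilde q_Y=-\tilde q_X$. Suppose $\tilde q_X\in[b_{i-1}^q,b_i^q)$. The symmetry of the edges, $b_k^q=-b^q_{n_X-k}$, gives $\tilde q_Y\in(b^q_{n_X-i},b^q_{n_X-i+1}]$. Up to the endpoint, this is exactly bin $\piX(i)=n_X+1-i$.

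The only discrepancy is the event that $\tilde q_X$ lands exactly on an edge $b_k^q$, where the half-open conventions disagree. I expect this to be the one genuinely delicate point. I would dispose of it by showing the event has probability zero. With zero independent noise, $q_X^{\mathrm{obs}}=\qX(\xi)$, and $\qX$ is a strictly monotone sigmoid in $\xi$ by~\eqref{eq:weight}--\eqref{eq:charge-profiles}. Hence each edge has at most one preimage $\xi$. Under the kernel density $\rho(\xi\mid\xi_0)=\mathcal N(\xi_0,h^2)$, which is absolutely continuous, a single point has measure zero. Since there are finitely many edges, $Y=\piX(X)$ holds almost surely. In the empirical (frame-based) setting, one would instead add the mild assumption that no frame sits exactly on a bin edge.

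\textbf{Step 2: no extra information from $Y$.} Because $\piX$ is a bijection, $(X,Y)$ and $X$ generate the same partition of the sample space modulo null sets. Therefore $p(t,x,y)=p(t,x)\,\mathbf 1[y=\piX(x)]$, and $I(T;X,Y)=I(T;X)$ by invariance of mutual information under injective relabelling. By the same invariance, $I(T;Y)=I(T;X)$.

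\textbf{Step 3: the specific informations agree pointwise.} For every $t$, $p(y\mid t)=p(\piX^{-1}(y)\mid t)$ and $p(y)=p(\piX^{-1}(y))$. So the KL form in~\eqref{eq:Ispec} gives $\Ispec(t;Y)=\Ispec(t;X)$. The minimum in~\eqref{eq:Imin} is then attained with equality, so $\R=I(T;X)=I(T;Y)$.

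\textbf{Step 4: conclude.} Substituting into~\eqref{eq:UUXYS} gives $U_X=U_Y=0$ and
\[
S=I(T;X)-I(T;X)-I(T;X)+I(T;X)=0 .
\]
This mirrors the arithmetic in the proof of Proposition~\ref{prop:det-source}, but with $X$ and $Y$ mutually determined rather than each determining $T$. Note that $T$ need not be a function of the sources here, so $\R$ may be strictly less than $H(T)$.
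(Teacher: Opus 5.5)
Your proposal is correct and follows essentially the same route as the paper. You establish $Y=\piX(X)$, then use invariance of the KL form of $\Ispec$ under the bijection $\piX$ to get $\R=I(T;X)=I(T;Y)$, and use that $Y$ is a function of $X$ to get $I(T;X,Y)=I(T;X)$, whence $U_X=U_Y=S=0$. Your explicit centring of the shared bins at $C/2$ and the measure-zero argument for bin-edge ties supply details that the paper leaves implicit in its ``almost surely''.
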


\begin{proof}
Under exact conservation, $q_Y^{\mathrm{obs}} = C - q_X^{\mathrm{obs}}$ and symmetric binning yields $Y = \piX(X)$.
Since $\piX$ is an invertible relabelling of the source alphabet, $\Ispec(t;Y) = \Ispec(t;X)$ for every~$t$ (the KL divergence $D_{\mathrm{KL}}(p(Y \mid t) \| p(Y))$ equals $D_{\mathrm{KL}}(p(X \mid t) \| p(X))$ under the bijection $\piX$).
The minimum in~\eqref{eq:Imin} is therefore attained at equality: $\R = \sum_t p(t)\,\Ispec(t;X) = I(T;X)$.
Hence $U_X = U_Y = 0$.
Finally, $I(T;X,Y) = I(T;X)$ since $\piX(X)$ is a function of~$X$, so $S = I(T;X) - 2\,I(T;X) + I(T;X) = 0$.
\end{proof}

\noindent
Corollary~\ref{cor:conservation} shows that any non-trivial structure in the PID---whether unique information or synergy---arises from the independence of the noise terms, representing the information contributed by electronic degrees of freedom beyond the minimal two-state subspace (Remark~\ref{rem:charge-sum}).

\subsection{Redundancy--synergy balance and limiting behaviour}
\label{sec:RS-balance}

The balance between redundancy, unique information, and synergy along the reaction coordinate is governed by how informative each individual charge is about the bond-length asymmetry at a given query point.
Differentiating $\qX(\xi) = q_0 - \delta\,w(\xi)$ gives the charge sensitivity
\begin{equation}
  \frac{d\qX}{d\xi}
  = \frac{\delta\,\lambda^2}
         {2(\xi^2 + \lambda^2)^{3/2}},
  \label{eq:sensitivity}
\end{equation}
which is maximal at the transition state ($|d\qX/d\xi|_{\xi=0} = \delta/(2\lambda)$) and decays as $\delta\lambda^2/(2|\xi|^3)$ in the reactant/product basins.
A dimensionless measure of local informativeness is the signal-to-noise ratio
\begin{equation}
  \mathrm{SNR}(\xi_0) \equiv
  \frac{\delta\,h\,\lambda^2}
       {2\,\sigma_q\,(\xi_0^2 + \lambda^2)^{3/2}},
  \label{eq:SNR}
\end{equation}
which compares the change in mean charge across the kernel window to the observation noise.

Where the SNR is large (near the transition state), the charge profile is approximately linear and each charge individually tracks the bond-length asymmetry; in this regime, the redundancy fraction $\R / I(T;X,Y)$ is large.
Where the SNR is small (deep in the reactant or product basins), each charge individually saturates and carries little information about geometry; any residual information in the joint pair beyond what each individual source provides is attributed to synergy and unique information.
Since the SNR~\eqref{eq:SNR} decreases monotonically with $|\xi_0|$, the redundancy fraction decreases as the query point moves away from the transition state.

To make the limiting behaviour precise, the charge profile can be linearised near the transition state ($\qX \approx q_0 - \delta/2 + \delta\xi/(2\lambda)$), yielding a jointly Gaussian approximation for the \emph{continuous} (pre-binning) variables $(\xi, q_X^{\mathrm{obs}}, q_Y^{\mathrm{obs}})$.
In the continuous Gaussian limit, the target is not discretised into signed bins, and the PID reduces to the form of Proposition~\ref{prop:U-zero-even} with $U = 0$.
The resulting continuous mutual informations, expressed in terms of the dimensionless parameter $\gamma \equiv \delta^2 h^2/(4\lambda^2\sigma_q^2)$, are:
\begin{equation}
  \R_{\mathrm{G}} = \tfrac{1}{2}\log_2(1+\gamma),
  \qquad
  S_{\mathrm{G}} = \tfrac{1}{2}\log_2\!\Bigl(
  \frac{1+2\gamma}{1+\gamma}\Bigr).
  \label{eq:gauss}
\end{equation}
These are continuous-variable surrogates, not the binned PID itself (which is bounded above by $\log_2 n_T$ and generically has $U > 0$ for a signed target).
Nevertheless, the \emph{fractions} they predict are informative.
For $\gamma \gg 1$ (high SNR, near the transition state), the redundancy fraction
$\R_{\mathrm{G}}/(\R_{\mathrm{G}} + S_{\mathrm{G}}) \to 1$:
each charge individually captures nearly all available geometric information.
For $\gamma \ll 1$ (low SNR, in the basins), both $\R_{\mathrm{G}}$ and $S_{\mathrm{G}}$ vanish as $\gamma / (2\ln 2)$ with equal magnitude, so the redundancy fraction approaches $1/2$: the pair provides exactly twice the information of either source alone, split evenly between redundancy and synergy.
The VB parameter $\lambda$ (avoided-crossing width) and the charge-transfer parameter $\delta$ together control the smooth interpolation between these two regimes through the single ratio $\gamma$.

\subsection{Numerical results}
\label{sec:numerical-results}

Figures~\ref{fig:h2-mi} and~\ref{fig:h2-frac} show the local PID evaluated numerically along the reaction coordinate for the analytic VB model of Section~\ref{sec:hamiltonian}, with parameters $\lambda = 1$, $\delta = 0.5$, $q_0 = \delta/2$, $\sigma_q = 0.12$, and $h = 0.5$ ($n_T = n_X = 5$; see Section~\ref{sec:joint-pid} for the binning conventions).
The sources are $X = \mathrm{bin}\!\bigl(q({}^{\mathrm{left}}\mathrm{H})\bigr)$ and $Y = \mathrm{bin}\!\bigl(q({}^{\mathrm{right}}\mathrm{H})\bigr)$, i.e.\ the discretised net charges on the two terminal hydrogen atoms, while the target is $T = \mathrm{bin}(\xi)$ with $\xi = d_X - d_Y$.

\begin{figure}[t]
  \centering
  \includegraphics[width=\columnwidth]{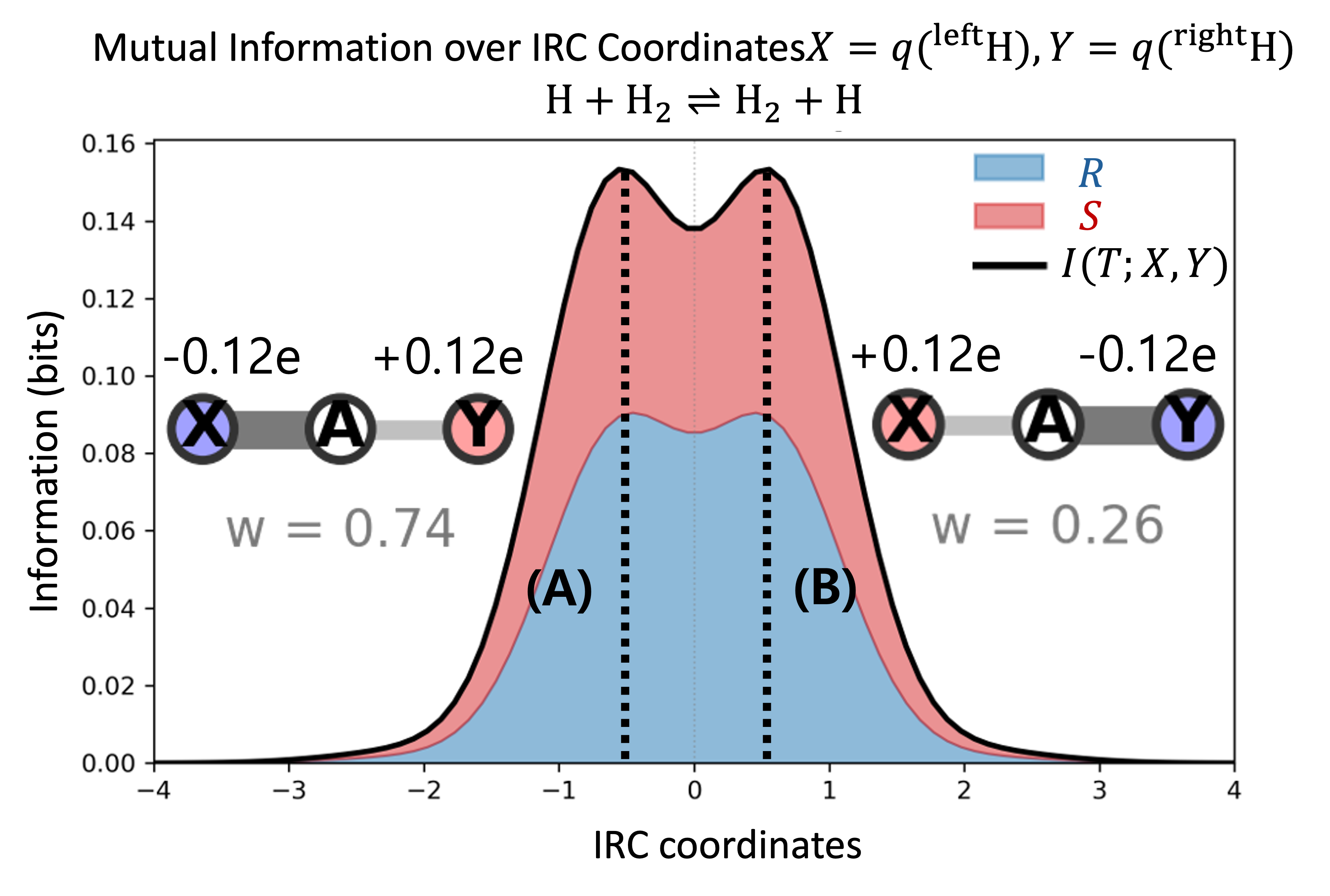}
  \caption{%
    Stacked PID decomposition of the joint mutual information (MI) $I(T;\,X,Y)$ along the IRC for the symmetric atom-transfer reaction $\mathrm{H} + \mathrm{H}_2 \to \mathrm{H}_2 + \mathrm{H}$.
    The blue region is the redundancy~$\R$, the red region is the synergy~$\mathcal{S}$, and the total joint MI $I(T;X,Y)$ (black curve).
    Dashed vertical lines mark the two MI peaks (A) and (B), where the charge magnitudes on the two terminals are equal ($|\bar{q}_X| = |\bar{q}_Y|$).
    Inset atom diagrams show the VB charge distributions at representative points.
    The unique PID contributions are zeroes along the IRC coordinates in this case.}
  \label{fig:h2-mi}
\end{figure}

\begin{figure}[t]
  \centering
  \includegraphics[width=\columnwidth]{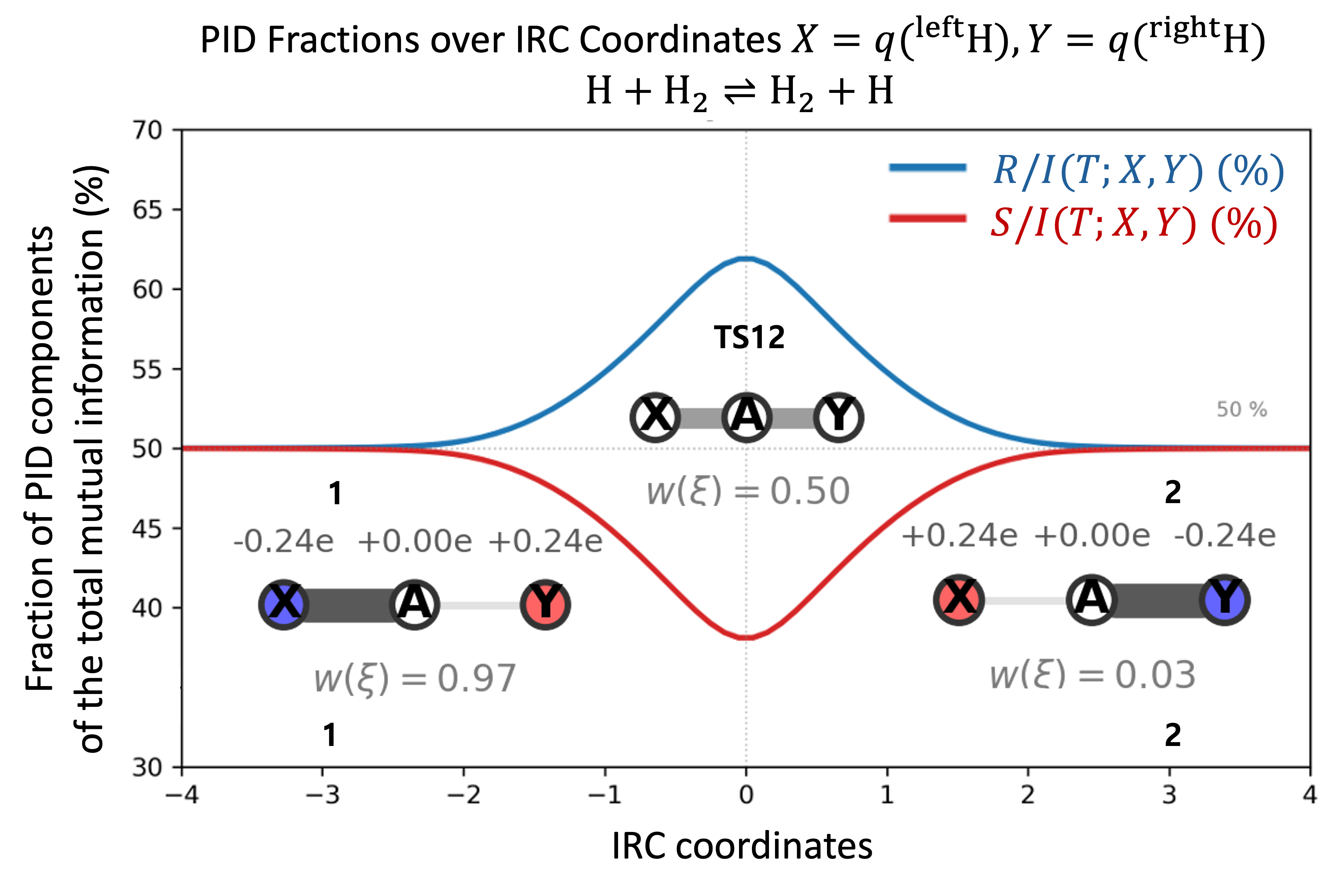}
  \caption{%
    Fractional PID composition $\R/I(T;X,Y)$ and $\mathcal{S}/I(T;X,Y)$ (in percent) along the IRC.
    At the transition state ($\xi = 0$, marked TS12) the redundancy fraction peaks at ${\sim}\,62\%$; deep in the basins (labels~1 and~2) both fractions converge to $50\%$, the low-SNR Gaussian limit~\eqref{eq:gauss}.}
  \label{fig:h2-frac}
\end{figure}

Figure~\ref{fig:h2-mi} displays the joint mutual information $I(T;\,X,Y)$ as a stacked decomposition into redundancy~$\R$ and synergy~$\mathcal{S}$.
The profile exhibits a characteristic double-peaked structure with a local minimum at $\xi = 0$ (the transition state) and two symmetric maxima at $\xi \approx \pm 1$ (marked A and B in the
figure).
At the transition state the charge sensitivity~\eqref{eq:sensitivity} is maximal, yet both terminal charges sit at the midpoint of their sigmoidal profile and yield statistically identical distributions ($\bar{q}_X(0) = \bar{q}_Y(0)$).
Consequently the two sources are maximally redundant: each individually tracks the bond-length asymmetry with high fidelity, but the pair provides little information beyond what either source supplies alone.
The local MI is therefore suppressed relative to the flanking peaks, where the charges have differentiated.
The two peaks at $\xi \approx \pm 1$ correspond to the geometries at which the magnitudes of the terminal charges are equal, $|\bar{q}_X| = |\bar{q}_Y|$, but with opposite signs (one atom gaining electron density as the other loses it).
At these points the charges lie in distinct informational regimes---one is near its bonded limit, the other near its free-radical limit---so the joint source carries substantial synergistic information in addition to a robust redundant component.
This interplay of partial differentiation (enabling synergy) with retained sensitivity (supporting redundancy) maximises $I(T;\,X,Y)$.
In the reactant and product basins ($|\xi| \gg \lambda$) both charges saturate and the MI vanishes, consistent with the $\mathrm{SNR} \to 0$ limit of~\eqref{eq:SNR}: the electronic observables become insensitive to the (now frozen) geometry.

The fractional decomposition in Figure~\ref{fig:h2-frac} sharpens this picture.
At the transition state the redundancy fraction peaks at ${\sim}\,62\%$, confirming that each source individually captures a dominant share of the geometric information (the high-SNR regime, $\gamma \gg 1$, in~\eqref{eq:gauss}).
Because the reaction symmetry enforces $U_X = U_Y$ (Proposition~\ref{prop:U-equal}) and these unique informations are small near the symmetric point, the synergy fraction $\mathcal{S}/I \approx 38\%$ accounts for essentially all the remaining MI at $\xi = 0$.
As $|\xi|$ increases and the SNR falls, the redundancy fraction decreases while the synergy fraction rises; in the deep basins both fractions converge to the $50\%$ asymptote predicted by the low-SNR Gaussian limit~\eqref{eq:gauss} ($\gamma \ll 1$), where the pair provides exactly twice the information of either source alone, split equally between redundancy and synergy.
The monotonic transition between these two limiting behaviours is governed entirely by the dimensionless parameter $\gamma = \delta^2 h^2 / (4\lambda^2 \sigma_q^2)$, confirming that the VB coupling~$\lambda$ and the charge-transfer amplitude~$\delta$ together set the information-theoretic landscape of the reaction.
}{\tbd}
\IfFileExists{results.tex}{
\section{On more complex molecules: \texorpdfstring{S\textsubscript{N}2}{SN2} reactions}
\label{sec:results}

In this section, we consider three S\textsubscript{N}2 reactions:
\begin{itemize}
    \item[(i)] the symmetric identity exchange $\mathrm{F^- + CH_3F \rightleftharpoons CH_3F + F^-}$,
    \item[(ii)] the asymmetric halide substitution $\mathrm{F^- + CH_3Br \rightarrow CH_3F + Br^-}$, and
    \item[(iii)] the hydroxide substitution $\mathrm{OH^- + BrCH_2CH_3 \rightarrow HOCH_2CH_3 + Br^-}$,
\end{itemize}
and analyse how the mutual information and partial information (PI) components, obtained via PID, vary along the intrinsic reaction coordinate (IRC).
Reaction (i) satisfies the permutation symmetry of Proposition~\ref{prop:symmetry}; reaction (ii) breaks this symmetry because the nucleophile and leaving group are chemically distinct; reaction (iii) introduces a larger substrate and a different nucleophile, testing the generality of the framework beyond halide--methyl systems.
All three exhibit the canonical S\textsubscript{N}2 bonding evolution---nucleophilic attack, a pentacoordinate transition state, and leaving-group departure---and are therefore useful benchmarks for PID analysis.

For each reaction, a transition-state (TS) structure was located on the solvated electronic potential energy surface by first computing a minimum-energy path between optimised endpoints using the nudged elastic band (NEB) method~\cite{jonsson1998nudged}, with a climbing-image refinement to converge the saddle point~\cite{henkelman2000climbing,henkelman2000improved}.
The TS was then refined by TS optimisation and verified by harmonic vibrational analysis (a single imaginary frequency along the reaction coordinate).
Starting from the refined TS and its Hessian, the intrinsic reaction coordinate (IRC) was obtained by integrating the mass-weighted steepest-descent equations in both forward and backward directions.

All electronic-structure calculations were performed with ORCA~5.0~\cite{Neese2020,Neese2022} using the $\omega$B97X-D4 density functional~\cite{najibi2020dft,caldeweyher2019generally,chai2008long} and the RIJCOSX approximation~\cite{izsak2011overlap} with the def2/J auxiliary basis~\cite{weigend2006accurate}; solvation was described by the SMD continuum model with water as solvent~\cite{marenich2009universal}.
Def2-SVP~\cite{weigend2005balanced} was used for endpoint optimizations and NEB calculations, while def2-TZVPP~\cite{weigend2005balanced} was used for TS refinement and IRC integration.

Along each IRC, net atomic charges were calculated from the electronic density using the DDEC6 partitioning scheme~\cite{Manz2016,Manz2017} as implemented in Chargemol.
The charges on the nucleophile and leaving-group centres define the electronic sources $X$ and $Y$.
The geometric target is $T=\mathrm{bin}(\xi)$, where the bond-asymmetry coordinate $\xi=d_X-d_Y$ is the difference between the central-carbon distance to the nucleophile ($d_X$) and to the leaving group ($d_Y$) (i.e.\ forming versus breaking C--X distances: C--F versus C--F for the identity exchange, C--F versus C--Br for $\mathrm{CH_3Br+F^-}$, and C--O versus C--Br for $\mathrm{BrCH_2CH_3+OH^-}$).

The PID sweep then follows Algorithm~\ref{alg:pid-irc} with the default parameters specified in Section~\ref{sec:pid-irc}.

\begin{figure}[tb]
  \centering
  \includegraphics[width=\columnwidth]{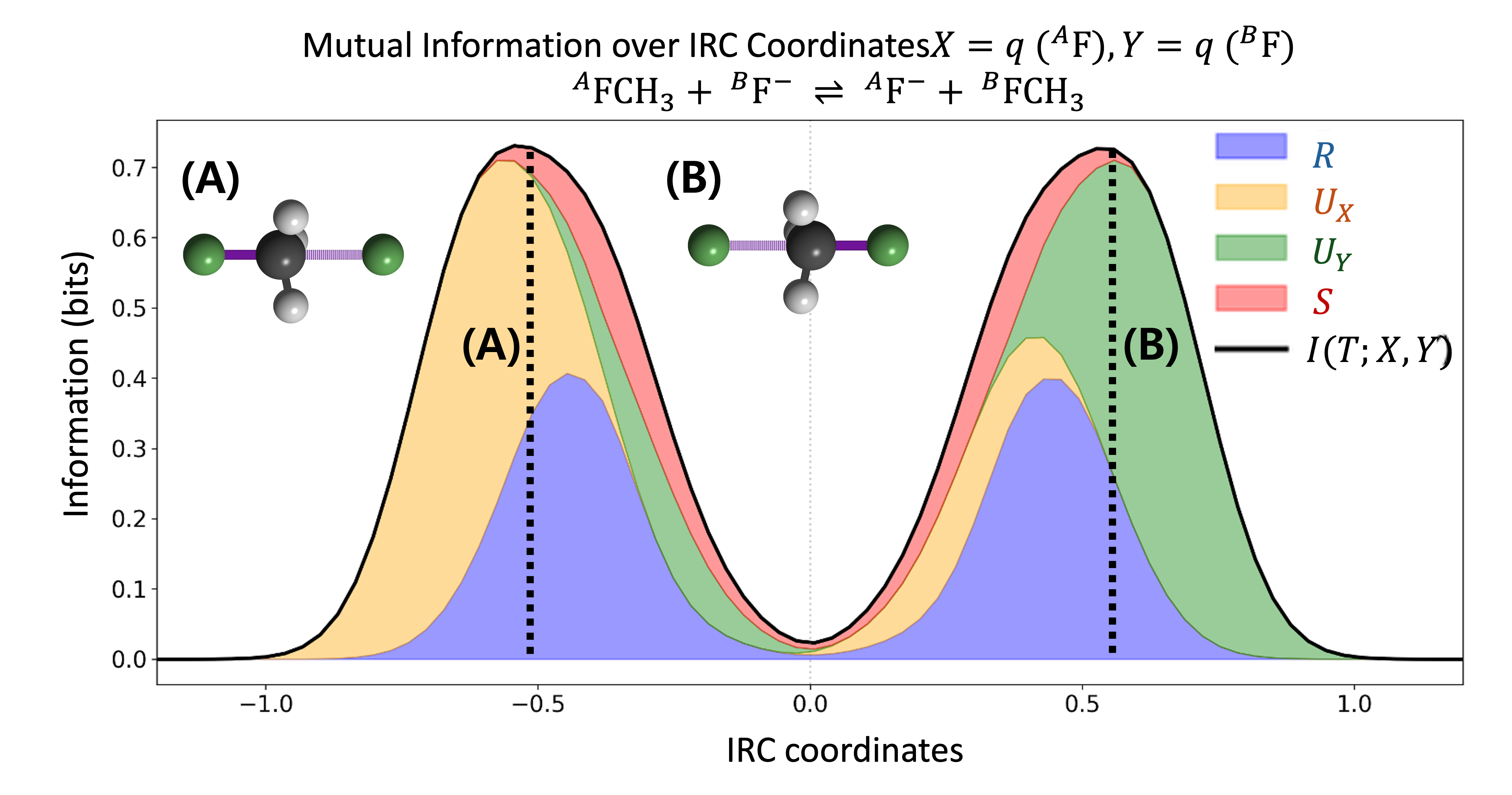}
  \caption{Mutual information and PI components along the IRC for the symmetric identity S\textsubscript{N}2 exchange $\mathrm{F^- + CH_3F \rightleftharpoons CH_3F + F^-}$.
  The black curve shows $I(T;X,Y)$, where the target $T=\mathrm{bin}(d_X-d_Y)$ is the binned bond-asymmetry coordinate and the sources are the DDEC6 charges $X=q({}^{A}\!F)$ and $Y=q({}^{B}\!F)$ on the two labelled fluorine centres.
  Coloured stacked areas show the PI components: redundancy $\R$ (blue), unique informations $U_X$ (orange) and $U_Y$ (green), and synergy $S$ (red), satisfying $I(T;X,Y)=\R+U_X+U_Y+S$ pointwise in $s$.
  Vertical dashed lines mark the symmetric maxima and the insets depict the corresponding mirror-related geometries.}
  \label{fig:fch3f-mi}
\end{figure}

\begin{figure}[tb]
  \centering
  \includegraphics[width=\columnwidth]{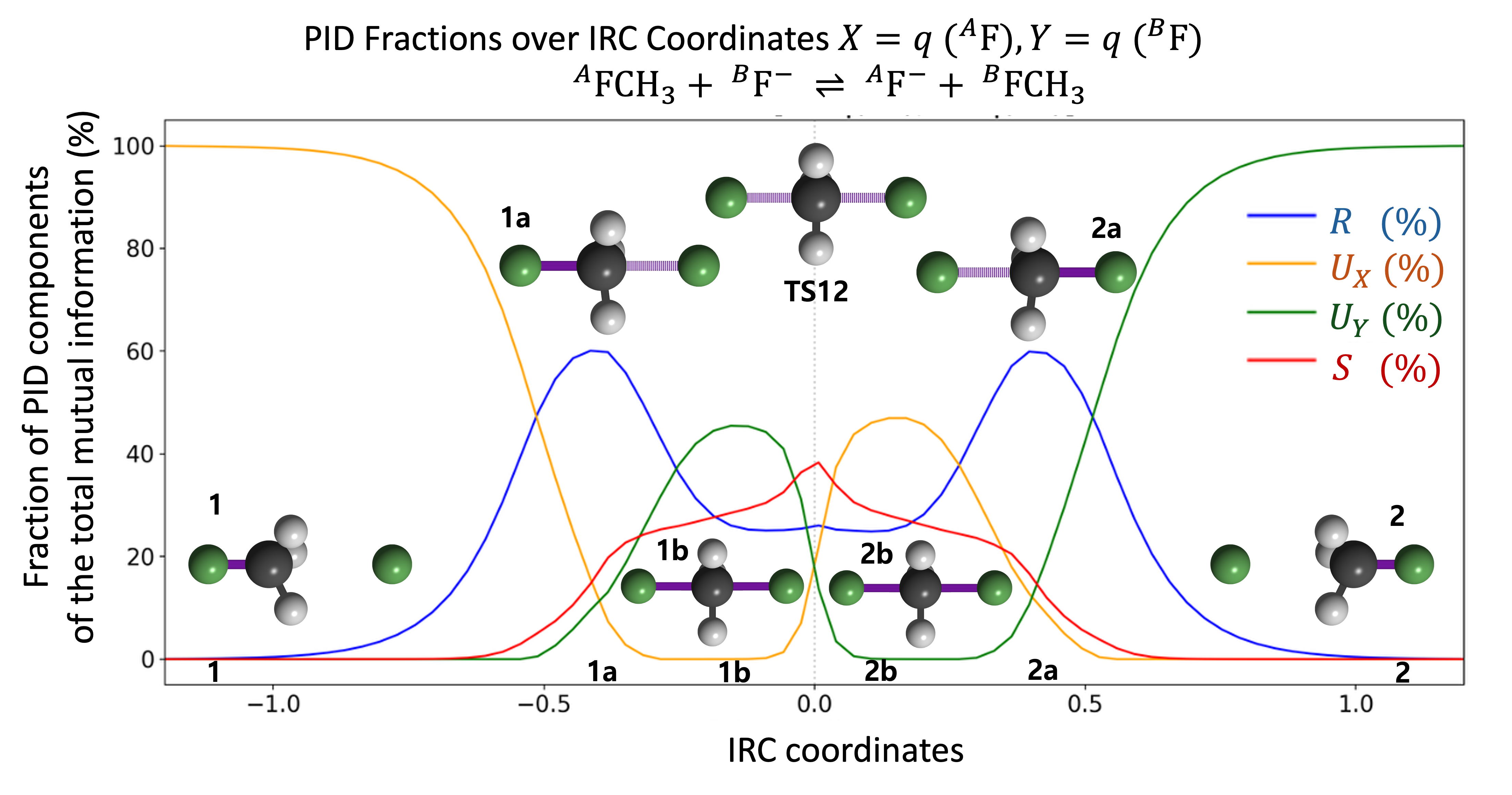}
  \caption{Fractional PI components along the IRC for the symmetric identity S\textsubscript{N}2 exchange $\mathrm{F^- + CH_3F \rightleftharpoons CH_3F + F^-}$ with electronic sources $X=q({}^{A}\!F)$ and $Y=q({}^{B}\!F)$.
  The curves show the PI components normalised by the total joint mutual information, i.e.\ $\R/I(T;X,Y)$ (blue), $U_X/I(T;X,Y)$ (orange), $U_Y/I(T;X,Y)$ (green), and $S/I(T;X,Y)$ (red), expressed as percentages and satisfying $\R+U_X+U_Y+S=I(T;X,Y)$ pointwise in $s$. 
  Representative geometries are shown at the reactant and product extremes (1 and 2), near the mutual-information maxima (1a and 2a), in the intermediate three-centre regime (1b and 2b), and at the transition state (TS12, $s=0$).
  The symmetry of the identity exchange is reflected in the exchange of unique-information fractions under $s\mapsto -s$ and in the near-even profiles of redundancy and synergy.}
  \label{fig:fch3f-frac}
\end{figure}

For the symmetric identity exchange $\mathrm{F^- + CH_3F \rightleftharpoons CH_3F + F^-}$, Fig.~\ref{fig:fch3f-mi} shows the joint mutual information $I(T;X,Y)$ (black curve) between the geometric target $T=\mathrm{bin}(d_X-d_Y)$ and the two electronic sources $X=q({}^{A}\!F)$ and $Y=q({}^{B}\!F)$, together with its PID into redundancy $\R$, unique informations $U_X$, $U_Y$, and synergy $S$ (stacked coloured areas).
Two symmetric maxima of $I(T;X,Y)$ appear at equal magnitude and at opposite IRC coordinates, with mirror-related geometries (insets) and identical total information content, reflecting the permutation symmetry that exchanges the two fluorine labels and reverses the reaction coordinate.
As expected for an identity reaction, the information profile is approximately even in $s$, with component-wise symmetry relations visible in the decomposition: the unique-information contributions interchange under $s\mapsto -s$ (qualitatively, $U_X(s)\approx U_Y(-s)$), while $\R(s)$ and $S(s)$ remain essentially symmetric.

Chemically, the left-hand maximum corresponds to the reactant-side region in which the incoming fluoride begins to interact strongly with the methyl carbon while the original C--F bond is already significantly weakened (pre-TS shoulder), so that small changes in the bond-asymmetry coordinate $\xi=d_X-d_Y$ are strongly reflected in the electronic descriptors.
In this regime the information about $\xi$ is carried predominantly as unique information in $X=q({}^{A}\!F)$, i.e.\ in the charge of the fluorine initially bound to carbon; as the trajectory proceeds toward the central symmetric region, this unique contribution diminishes and is redistributed into shared components (redundancy and synergy), consistent with an increasing degree of coupled response of the two fluorine charges as the three-centre arrangement develops. Past the centre, the same pattern repeats with labels exchanged: the unique information becomes dominated by $U_Y$ on the product-side maximum, corresponding to the fluorine that is now bonded to carbon, while the complementary source becomes less informative. Thus the PID makes the symmetry-enforced ``handoff'' of geometric information between the two fluorine centres explicit: information that is unique to the initially bound atom on one side becomes, after the exchange, unique to the newly bound atom on the other side, with redundancy and synergy peaking in the intermediate regime where the electronic response is intrinsically two-centre.

Figure~\ref{fig:fch3f-frac} reports the \emph{fractional} PI components, i.e.\ each component normalised by the total joint mutual information, $\R/I(T;X,Y)$, $U_X/I(T;X,Y)$, $U_Y/I(T;X,Y)$, and $S/I(T;X,Y)$ (expressed as percentages), for the symmetric identity exchange $\mathrm{F^- + CH_3F \rightleftharpoons CH_3F + F^-}$ with sources $X=q({}^{A}\!F)$ and $Y=q({}^{B}\!F)$.
At the asymptotic ends of the IRC (labels 1 and 2), the information is essentially \emph{purely unique}: on the reactant-side extreme (negative $s$) the geometry target is encoded almost entirely in $U_X$, whereas on the product-side extreme (positive $s$) it is encoded almost entirely in $U_Y$.
This ``one-centre'' regime is expected when one fluorine is clearly bound to carbon and the other is remote, so that the charge on the bound centre is the dominant electronic reporter of the bond-asymmetry coordinate.

Moving from either end toward the TS, the fractional PI components reveal a structured redistribution of information that is constrained by the reaction symmetry.
First, the unique component that dominates at the endpoint decreases and redundancy rises, reaching local maxima near 1a and 2a, coincident with the onset of substantial interaction in the forming/breaking region.
As the system approaches the more three-centre-like geometries (1b and 2b), the initially dominant unique term is strongly suppressed and the complementary unique term becomes appreciable, signalling that the geometric macrostate is no longer attributable to one fluorine charge alone but is shared between the two sources.
At the symmetric TS ($s=0$), the fractional synergy reaches its maximum, indicating that the most informative descriptor of the bond-asymmetry at this point is a genuinely \emph{joint} charge pattern across both fluorine centres rather than either charge marginally.
The right-hand side mirrors the left-hand side under exchange of the fluorine labels, consistent with the expected symmetry relations $U_X(s)\approx U_Y(-s)$ and approximately even $\R(s)$ and $S(s)$.

\begin{figure}[tb]
  \centering
  \includegraphics[width=\columnwidth]{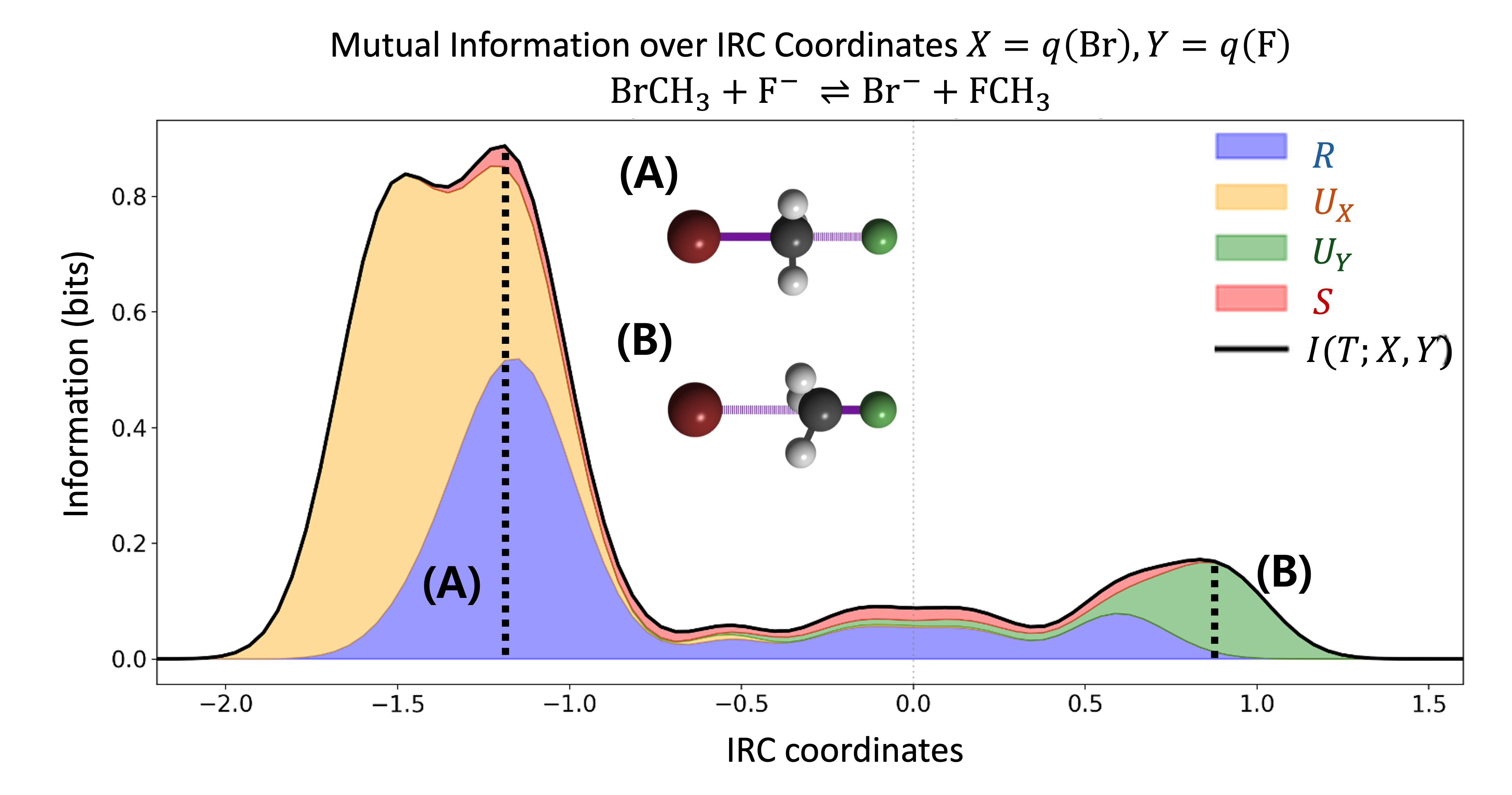}
  \caption{Mutual information and PI components along the IRC for the asymmetric S\textsubscript{N}2 reaction $\mathrm{F^- + CH_3Br \rightarrow CH_3F + Br^-}$.
  The black curve shows $I(T;X,Y)$ with $T=\mathrm{bin}(d_X-d_Y)$ and sources $X=q(\mathrm{Br})$ and $Y=q(\mathrm{F})$ (DDEC6 charges).
  Coloured stacked areas show redundancy $\R$ (blue), unique informations $U_X$ (orange) and $U_Y$ (green), and synergy $S$ (red), satisfying $I(T;X,Y)=\R+U_X+U_Y+S$ pointwise in $s$.
  Dashed vertical lines mark representative maxima on the reactant- and product-side branches; insets depict the corresponding geometries.}
  \label{fig:brch3f-mi}
\end{figure}

\begin{figure}[tb]
  \centering
  \includegraphics[width=\columnwidth]{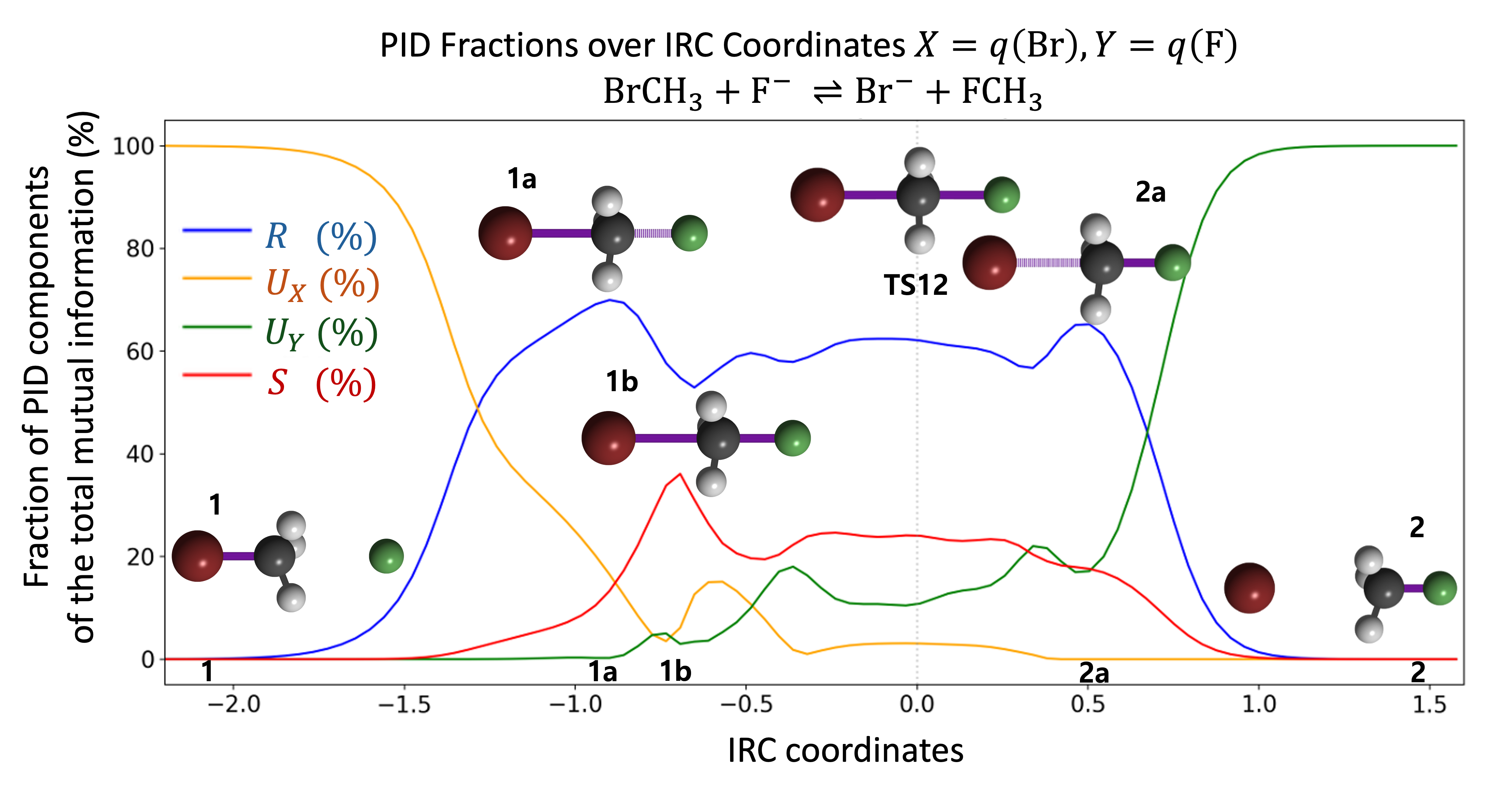}
  \\caption{Fractional PI components along the IRC for the asymmetric S\textsubscript{N}2 reaction $\mathrm{F^- + CH_3Br \rightarrow CH_3F + Br^-}$ with sources $X=q(\mathrm{Br})$ and $Y=q(\mathrm{F})$ (DDEC6 charges).
  Curves show $\R/I(T;X,Y)$ (blue), $U_X/I(T;X,Y)$ (orange), $U_Y/I(T;X,Y)$ (green), and $S/I(T;X,Y)$ (red), expressed as percentages and satisfying $\R+U_X+U_Y+S=I(T;X,Y)$ pointwise in $s$.
  Representative geometries are shown at the reactant and product asymptotes (1 and 2), near the reactant-side redundancy maximum (1a), the intermediate synergistic regime (1b), the transition state (TS12, $s=0$), and the product-side redundancy feature (2a).}
  \label{fig:brch3f-frac}
\end{figure}
For the asymmetric substitution $\mathrm{F^- + CH_3Br \rightarrow CH_3F + Br^-}$, Fig.~\ref{fig:brch3f-mi} displays the joint mutual information $I(T;X,Y)$ (black) and its PID into redundancy $\R$, unique informations $U_X$, $U_Y$, and synergy $S$, with sources $X=q(\mathrm{Br})$ and $Y=q(\mathrm{F})$ and target $T=\mathrm{bin}(d_X-d_Y)$ defined by the forming/breaking C--X distances.
In contrast to the identity exchange (Fig.~\ref{fig:fch3f-mi}), both $I(T;X,Y)$ and its decomposition are strongly \emph{asymmetric} with respect to the IRC coordinate.
On the reactant-side branch (negative $s$), two pronounced maxima appear at $s < -1$, signalling two distinct regimes in which the electronic descriptors most strongly report the evolution of the bond-asymmetry coordinate: an earlier ``onset'' region where the approaching nucleophile begins to polarise the substrate, followed by a sharper region in which the interaction is sufficiently strong that small geometric changes produce a large, highly structured electronic response.

The PI components clarifies how this information is carried.
Near the larger reactant-side maximum, redundancy becomes substantial and the synergy component is non-negligible, indicating that the geometry is best encoded by a \emph{joint} pattern involving both centres rather than by either charge in isolation---consistent with a regime where the forming C--F interaction and the weakening C--Br bond are simultaneously expressed in the electronic redistribution.
By contrast, on the product-side branch (positive $s$) the most prominent peak is much smaller and is dominated by $U_Y$, with only modest redundancy and little synergy.
This indicates that, once the C--F bond is formed and bromide is departing, the geometric target is largely readable from the fluorine-centred descriptor alone, while the leaving-group charge contributes comparatively little additional geometric information except when one approaches the reverse direction.
Overall, the loss of left--right symmetry in the PI components provides a compact information-theoretic signature of mechanistic asymmetry: the nucleophilic-attack side exhibits a pronounced two-centre (redundant/synergistic) encoding, whereas the product-side geometry is predominantly captured by a one-centre unique contribution.

Figure~\ref{fig:brch3f-frac} shows the PID components normalised by the total joint mutual information $I(T;X,Y)$ for the asymmetric substitution
$\mathrm{F^- + CH_3Br \rightarrow CH_3F + Br^-}$, with sources $X=q(\mathrm{Br})$ and $Y=q(\mathrm{F})$.
At the far ends of the IRC (labels 1 and 2), the reacting ion is effectively non-interacting with the substrate: on the reactant side (1) the nucleophile is well separated from $\mathrm{CH_3Br}$, while on the product side (2) $\mathrm{Br^-}$ is well separated from $\mathrm{CH_3F}$.
In these asymptotic regimes the geometric target is encoded almost entirely as \emph{unique} information in a single source (near 100\% $U_X$ or $U_Y$), consistent with a situation where the local bond-asymmetry $\xi=d_X-d_Y$ is essentially determined by one centre at a time (either the leaving group in the reactant basin or the newly formed C--F bond in the product basin).

Approaching from the reactant side toward the TS, the unique fraction $U_X/I$ decreases and a pronounced redundancy maximum develops near 1a, indicating that both charges begin to respond in a correlated way as the nucleophile approaches and the substrate becomes strongly polarised.
A subsequent rise in the synergy fraction (near 1b) signals a regime where the progress variable $\xi$ is best captured by the \emph{joint} charge pattern across the nucleophile and leaving group, rather than by either marginal alone—an information-theoretic signature of the developing three-centre, two-bond arrangement in which C--F formation and C--Br cleavage are coupled.
Across the central region around the TS, the redundancy remains substantial while $U_Y$ gradually increases, reflecting the shift of the most informative charge response from the leaving group to the nucleophile/product side as the C--F bond becomes established and bromide departs.
On the product side, a local redundancy enhancement near 2a marks the regime where the departing bromide still exerts an appreciable influence on the electronic redistribution; for larger positive $s$, this influence vanishes and the fractions collapse back to a single dominant unique contribution in the asymptotic ion–molecule limit.

\begin{figure}[tb]
  \centering
  \includegraphics[width=\columnwidth]{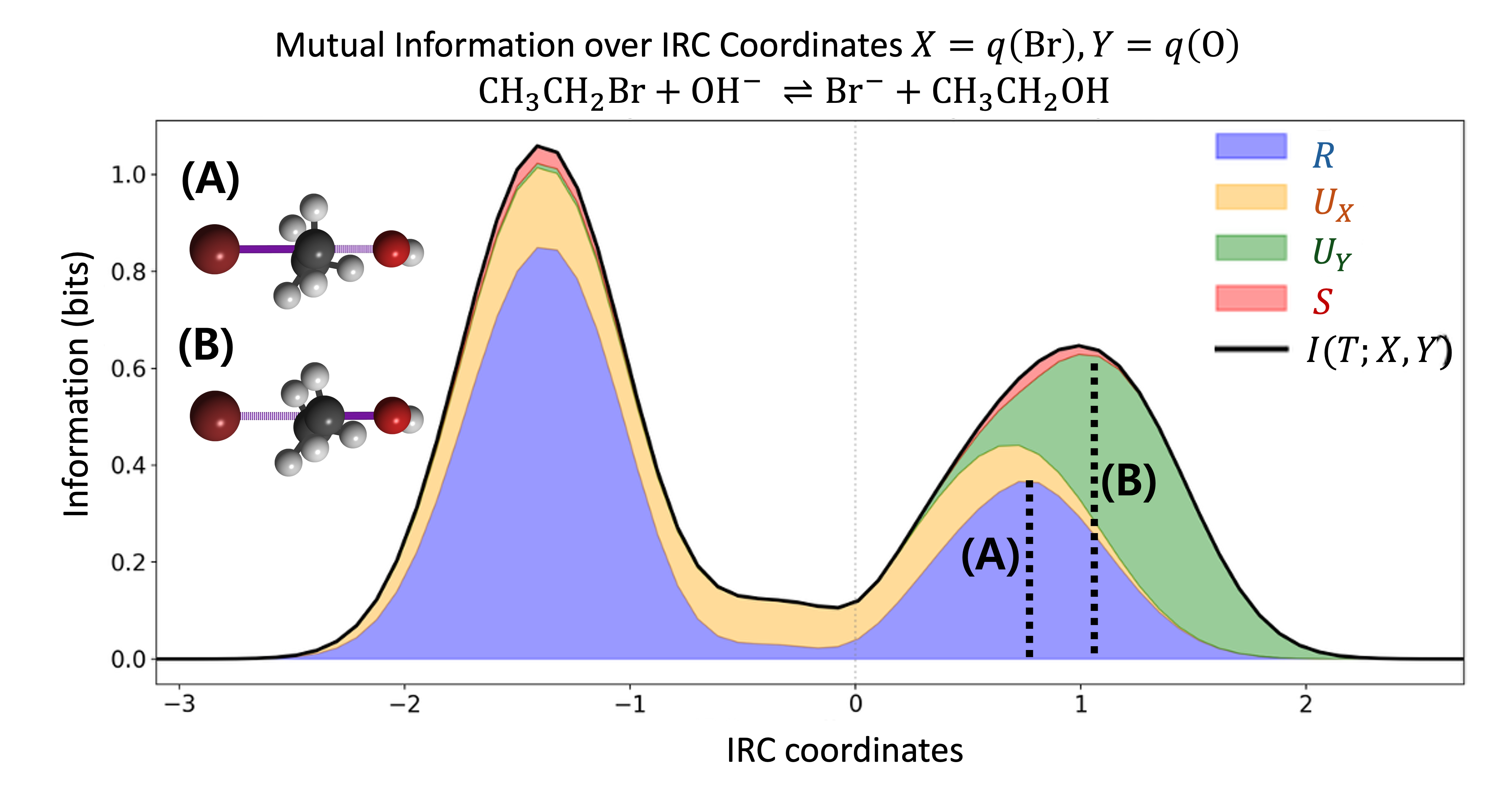}
  \caption{Mutual information and PID along the IRC for the S\textsubscript{N}2 reaction $\mathrm{CH_3CH_2Br + OH^- \rightarrow CH_3CH_2OH + Br^-}$.
  The black curve shows the joint mutual information $I(T;X,Y)$ between the geometric target $T=\mathrm{bin}(d_{\mathrm{C}\!-\!\mathrm{O}}-d_{\mathrm{C}\!-\!\mathrm{Br}})$ and the electronic sources $X=q(\mathrm{Br})$ and $Y=q(\mathrm{O})$ (DDEC6 charges).
  Coloured stacked areas show the PID components: redundancy $\R$ (blue), unique informations $U_X$ (orange) and $U_Y$ (green), and synergy $S$ (red), satisfying $I(T;X,Y)=\R+U_X+U_Y+S$ pointwise in the IRC coordinate.
  Dashed vertical lines mark representative product-side configurations (A) and (B); the corresponding inset geometries illustrate the relative extent of C--Br cleavage and C--O formation in these regimes.}
  \label{fig:bromoethane-mi}
\end{figure}

\begin{figure}[tb]
  \centering
  \includegraphics[width=\columnwidth]{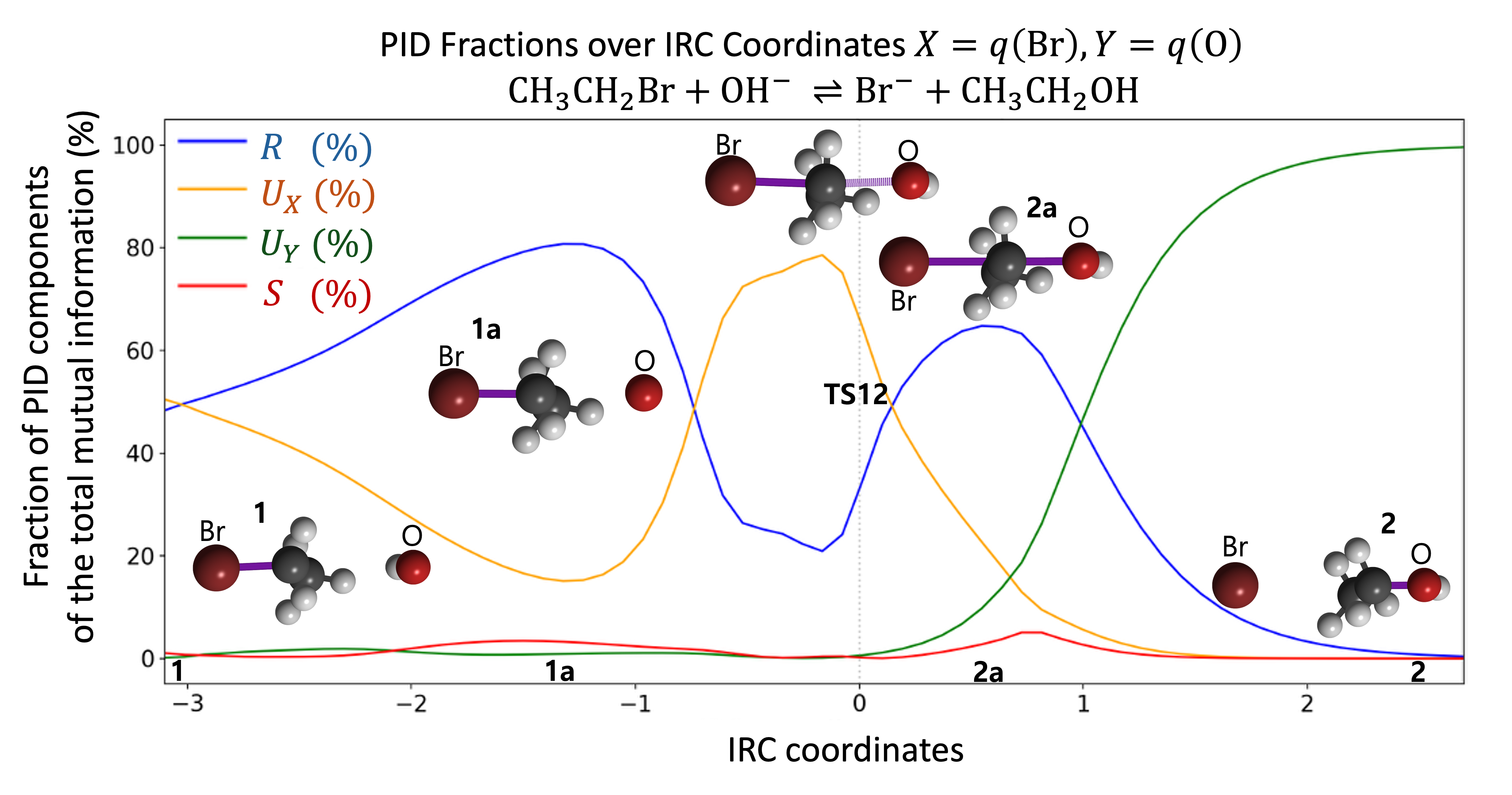}
  \caption{Fractional PI components along the IRC for $\mathrm{CH_3CH_2Br + OH^- \rightarrow CH_3CH_2OH + Br^-}$ with sources $X=q(\mathrm{Br})$ and $Y=q(\mathrm{O})$ (DDEC6 charges).
  Curves show $\R/I(T;X,Y)$ (blue), $U_X/I(T;X,Y)$ (orange), $U_Y/I(T;X,Y)$ (green), and $S/I(T;X,Y)$ (red), expressed as percentages and satisfying $\R+U_X+U_Y+S=I(T;X,Y)$ pointwise in the IRC coordinate.
  Representative geometries are shown at the reactant and product asymptotes (1 and 2), near reactant- and product-side redundancy maxima (1a and 2a), and at the transition state (TS12, $s=0$).}

  \label{fig:bromoethane_frac}
\end{figure}

For the larger substrate reaction $\mathrm{CH_3CH_2Br + OH^- \rightarrow CH_3CH_2OH + Br^-}$, Fig.~\ref{fig:bromoethane-mi} exhibits a strongly asymmetric information profile with two maxima in $I(T;X,Y)$.
The dominant peak on the reactant-side branch (negative IRC) reflects the fact that nucleophilic attack by $\mathrm{OH^-}$ induces a particularly strong and extended electronic reorganisation while the C--Br bond is still largely intact: as the oxygen lone pair approaches the electrophilic carbon, the incipient C--O interaction strongly polarises the substrate and simultaneously weakens and polarises the C--Br bond.
Because the target $T=\mathrm{bin}(d_{\mathrm{C}\!-\!\mathrm{O}}-d_{\mathrm{C}\!-\!\mathrm{Br}})$ couples these two distances, the corresponding charge responses on both centres,
$X=q(\mathrm{Br})$ and $Y=q(\mathrm{O})$, co-vary appreciably over this region, producing the large joint mutual information and its predominantly redundant character.
In other words, the larger left peak is not because bromine ``attracts'' carbon more strongly; rather it indicates that the approaching hydroxide perturbs the electronic structure over a wider range of geometries (strong polarisation and bond reorganisation), so both source charges become informative about the bond-asymmetry well before the transition state.

On the product-side branch (positive IRC), the smaller peak reflects a late-stage regime where the departing bromide still influences the electronic redistribution, but progressively decouples as separation increases.
At the point marked (A), the redundancy fraction is locally maximised, consistent with a configuration in which both the weakening C--Br interaction and the developing C--O bond contribute comparably to the geometric macrostate, so that both $q(\mathrm{Br})$ and $q(\mathrm{O})$ respond in a correlated way.
By the time the local maximum of $I(T;X,Y)$ is reached (B), the balance shifts: the C--O bond is more established and the oxygen-centred descriptor carries a larger unique contribution, while $q(\mathrm{Br})$ approaches its asymptotic anionic value and becomes less sensitive to further geometric change.
Beyond this region, as $\mathrm{Br^-}$ and $\mathrm{CH_3CH_2OH}$ become effectively non-interacting, all PID components (and hence $I(T;X,Y)$) decay toward zero, consistent with the loss of electronic sensitivity to the bond-asymmetry coordinate in the separated-products limit.

Figure~\ref{fig:bromoethane_frac} reports the fractional PI components for $\mathrm{CH_3CH_2Br + OH^- \rightarrow CH_3CH_2OH + Br^-}$, i.e.\ each component normalised by the local joint mutual information $I(T;X,Y)$, with $X=q(\mathrm{Br})$ and $Y=q(\mathrm{O})$.
Compared with the smaller halide--methyl systems, the approach to the asymptotic reactant region (label 1) is less sharply ``one-centre'': the redundancy fraction decays more gradually and $U_X/I$ does not saturate near 100\% (the IRC coordinate needs to hit -5 to be completely unique).
This reflects the fact that, for a larger and more flexible substrate, the bond-asymmetry coordinate $\xi=d_{\mathrm{C}\!-\!\mathrm{O}}-d_{\mathrm{C}\!-\!\mathrm{Br}}$ can still correlate weakly with broader substrate polarisation and conformational degrees of freedom even when the nucleophile is relatively distant, so the leaving-group charge does not act as an isolated reporter of $T$.

As the system proceeds from the reactant side toward the transition region, $\R/I$ increases and reaches a maximum near 1a, consistent with the onset of a strongly coupled electronic response in which both $q(\mathrm{Br})$ and $q(\mathrm{O})$ track the evolving approach geometry.
Chemically, this regime corresponds to the nucleophile adopting an orientation and distance that optimally engages the electrophilic carbon (backside approach) while the C--Br bond begins to polarise and weaken; the two electronic descriptors therefore carry overlapping information about $\xi$.
Around the TS ($s\approx 0$), the fractions indicate a redistribution from redundancy toward a larger unique contribution in $X$ on the early product side, followed by a rapid rise in $U_Y/I$ as the C--O bond becomes established and the oxygen-centred descriptor becomes the dominant reporter of the bond-asymmetry.
On the positive branch, the local redundancy maximum near 2a marks the late-stage three-centre regime in which C--O formation and C--Br cleavage remain coupled; beyond this point, as bromide separates and approaches its asymptotic anionic character, its contribution to the information rapidly diminishes and the decomposition collapses toward predominantly unique information in $Y$ (label 2).}{\tbd}
\IfFileExists{discussions.tex}{
\section{Discussion}
\label{sec:discussion}

This work introduced a reaction-coordinate-resolved partial information decomposition (PID) that quantifies, at each point along an intrinsic reaction coordinate (IRC), how selected electronic descriptors encode a coarse-grained geometric progress variable.
Across the S\textsubscript{N}2 benchmarks, the decomposition reproduces chemically sensible structure: symmetry-enforced exchange of the unique-information profiles in the identity reaction, systematic asymmetry when nucleophile and leaving group are distinct, and a broader redistribution of information in the larger bromoethane system where additional internal degrees of freedom influence the electronic response.
Viewed as a diagnostic, the PID provides a compact language for describing \emph{how} the electronic reorganisation along a reaction path is expressed in the chosen observables: as information that is accessible from either source alone (redundancy), attributable predominantly to one source (unique information), or encoded primarily in their joint pattern (synergy).
At the same time, the PID is a statement about the local empirical joint distribution $p(t,x,y\mid s_0)$ at each $s_0$ and should be interpreted as mechanistically suggestive rather than causal.

\paragraph{Limitations.}
Several limitations delimit the current scope and clarify where care is required.
First, the present formulation emphasises a fixed pair of sources $(X,Y)$ tied to two atomic centres.
In multi-centre scenarios (e.g.\ agostic interactions or delocalised rearrangements), the relevant electronic response may not be localisable to a single obvious pair of atoms; an adequate analysis may require repeating the PID over multiple centre choices or adopting fragment- or density-based sources.
Second, the pipeline conditions on a preselected reactant--product pair and analyses a single minimum-energy path.
This is appropriate for illustrating the method, but it does not address \emph{competition} between mechanisms.
In particular, for $\mathrm{CH_3CH_2Br + OH^-}$ the experimentally relevant pathway can include E2 elimination; our present analysis characterises the substitution IRC given fixed endpoints but does not quantify the relative accessibility of alternative channels or branching between them.
Third, the method relies on discretising continuous observables (``pixelisation'') to form discrete variables.
While discretisation is essential for the discrete PID, it necessarily coarsens the physical picture and can introduce sensitivity to bin resolution, tail behaviour, and to the chosen partitioning scheme for the electronic observable (e.g.\ different population analyses can shift quantitative charge values).
This issue becomes more acute for larger systems, where a single atomic charge may be an insufficient summary of a distributed electronic response.
Moreover, discretisation currently limits direct comparison between information measures and continuous mechanical quantities, such as reaction forces and energy gradients, without additional modelling to define smooth information profiles and their derivatives.
Finally, our demonstrations focus on reactions whose progress is naturally parameterised by a single bond-asymmetry coordinate about one carbon centre.
Highly concerted transformations involving strongly coupled multi-bond rearrangements (e.g.\ collective cyclisations or multi-centre bond reorganisations) are not addressed here; such cases likely require multi-dimensional targets and/or more-than-two sources to capture higher-order cooperative structure.

\paragraph{Future directions.}
These limitations suggest several concrete extensions.
A first priority is to connect information-theoretic structure to energetic diagnostics along the reaction coordinate.
On a sufficiently smooth representation of the local distributions (or on continuous analogues of the information measures), one can define derivatives of $I(T;X,Y)$ and of the PID components with respect to $s$, and compare their extrema to features of the potential-energy profile and the reaction force.
Such a comparison would sharpen the interpretation of ``onset'' and ``handoff'' regions by linking changes in informational encoding to changes in the underlying energetics.
A second direction is to expand the framework to address multi-centre and environment-dependent chemistry.
Practically, this motivates (i) promoting $X,Y$ beyond atomic charges to fragment charges, bond indices, or density-derived descriptors; (ii) performing systematic robustness checks across chemically sensible partitions and electronic population schemes; and (iii) moving from a single IRC trace to an ensemble conditioned on $s$ (or on a collective variable), thereby quantifying not only mean PID profiles but also their variability under thermal and environmental fluctuations.
A third direction is to extend beyond the two-source setting, enabling treatment of concerted reactions where progress is encoded in higher-order electronic patterns that cannot be reduced to a single pair of centres.
Finally, it is natural to recast the present construction in a channel/measurement language familiar from quantum information: the map from nuclear configuration to electronic descriptors defines an effective information channel, and the PID compares how different ``measurements'' (descriptor choices and partitions) encode the chosen geometric macrostate.
Such a reformulation could provide a more physics-oriented viewpoint, clarify the role of coarse-graining as a measurement model, and suggest principled ways to compare descriptor families and partitions on equal footing.}{\tbd}
\IfFileExists{conclusion.tex}{
\section{Conclusion}
\label{sec:conclusion}

In this work, a reaction-coordinate-resolved partial information decomposition (PID) framework is introduced to quantify, along an intrinsic reaction coordinate, how selected electronic descriptors encode a coarse-grained geometric progress variable.
By decomposing the joint mutual information into redundant, unique, and synergistic components, the method distinguishes regimes where reaction progress is readable from either centre alone, localised primarily on one centre, or expressed predominantly as a joint electronic pattern.

Applied to three prototypical S\textsubscript{N}2 reactions, the PID recovers chemically intuitive structure: symmetry-enforced exchange of unique-information profiles in the identity reaction, systematic asymmetry when nucleophile and leaving group differ, and broader information redistribution in the larger bromoethane system where additional molecular degrees of freedom influence the electronic response.
These results demonstrate that PID provides an interpretable and symmetry-sensitive descriptor of electronic reorganisation along a reaction path, complementary to standard energetic and structural analyses.

More broadly, the framework offers a principled route to compare electronic observables and partitions by their information content with respect to a chosen progress variable.
Future developments will focus on robustness across descriptor families and partitions, ensemble-based treatments and competing pathways, and differential comparisons to reaction forces and energy gradients, with the longer-term aim of integrating information-theoretic and physics-based perspectives on chemical reactivity.}{\tbd}

\bibliographystyle{unsrtnat} 
\bibliography{refs.bib}    

\end{document}